\newfont{\mycrnotice}{ptmr8t at 7pt}
\newfont{\myconfname}{ptmri8t at 7pt}
\newtheorem{theorem}{Theorem}[section]
\newtheorem{notation}[theorem]{Convention}
\newtheorem{lemma}[theorem]{Lemma}
\newtheorem{remark}[theorem]{Remark}
\newtheorem{definition}[theorem]{Definition}
\newtheorem{example}[theorem]{Example}
\newtheorem{prop}[theorem]{Proposition}
\newcommand{\bF}{ {\mathbb F}}
\newcommand{\bD}{ {\mathbb D}}
\newcommand{\bN}{ {\mathbb N}}
\newcommand{\bW}{ {\mathbb W}}
\newcommand{\bZ}{ {\mathbb Z}}
\newcommand{\cC}{ {\mathcal C}}
\newcommand{\cS}{ {\mathcal S}}
\newcommand{\ta}{\tilde{a}}
\newcommand{\tb}{\tilde{b}}
\newcommand{\tq}{\tilde{q}}
\newcommand{\spanning}{\text{span}}
\def\lc{\operatorname{lc}}
\def\im{\operatorname{im}}
\def\redy{\operatorname{red}_y}
\begin{document}

\title{New Bounds for Hypergeometric Creative Telescoping\titlenote{H.\ Huang was supported by the Austrian Science Fund (FWF) grant W1214-13, three NSFC grants (91118001, 60821002/F02, 11501552) and a 973 project (2011CB302401).}}
%
%
%
%
%

\numberofauthors{1} 
%

\author{\medskip
        Hui Huang$^{1,2}$ \\
        \smallskip
        \affaddr{$^1$KLMM,\, AMSS, \,Chinese Academy of Sciences, Beijing 100190, (China)}\\
        \smallskip
        \affaddr{$^2$Institute for Algebra, Johannes Kepler University, Linz A-4040, (Austria)}\\
        \smallskip
        \email{huanghui@amss.ac.cn}
    }
\maketitle
\begin{abstract}
Based on a modified version of Abramov-Petkov{\v s}ek reduction, a new algorithm to compute minimal telescopers for bivariate hypergeometric terms was developed last year. We investigate further in this paper and present a new argument for the termination of this algorithm, which provides an independent proof of the existence of telescopers and even enables us to derive lower as well as upper bounds for the order of telescopers for hypergeometric terms. Compared to the known bounds in the literature, our bounds are sometimes better, and never worse than the known ones.
\end{abstract}

\category{I.1.2}{Computing Methodologies}{Symbolic and Algebraic Manipulation}[Algebraic
Algorithms]

\terms{Algorithms, Theory}

\keywords{Modified Abramov-Petkov\v{s}ek reduction, Hypergeometric term, Telescoper, Order bound}

\section{Introduction}\label{SEC:intro}
This paper is about creative telescoping for hypergeometric terms.
A~{\em hypergeometric term} is an expression~$f_{x,y}$ in, say, two variables $x,y$ such
that the two shift quotients $f_{x+1,y}/f_{x,y}$ and $f_{x,y+1}/f_{x,y}$ can be
expressed as rational functions in $x$ and~$y$.  The prototypical example of a
hypergeometric term is the binomial coefficient $f_{x,y}=~\binom xy$.  Creative
telescoping is the main tool for simplifying definite sums of hypergeometric
terms. The task consists in finding some nonzero recurrence operator $L$ and another
hypergeometric term~$g_{x,y}$ such that $L\cdot f_{x,y}=g_{x,y+1}-g_{x,y}$. It
is required that the operator $L$ does not contain $y$ or the shift
operator~$\sigma_y$, i.e., it must have the form
$L=e_0+e_1\sigma_x+\cdots+e_\rho\sigma_x^\rho$ for some $e_0,\dots,e_\rho$ that only
depend on~$x$.

If $L$ and $g_{x,y}$ are as above, we say that $L$ is a {\em telescoper} for~$f_{x,y}$, and $g_{x,y}$ is a {\em certificate} for~$L$. Once a telescoper for $f_{x,y}$ is known, we can extract useful information about definite sums such
as $F_x=\sum_{y=0}^x f_{x,y}$ from~$L$. See~\cite{PWZ1996,Zeil1991} for further information. These references also contain classical algorithms for computing telescopers and certificates for given
hypergeometric terms. During the past 25 years, the technique of creative
telescoping has been generalized and refined in various
ways~\cite{MoZe2005,ApZe2006,BCCL2010,BCCLX2013,BLS2013,CHKL2015,CKK2016}. The latest trend in this development are so-called reduction-based algorithms, first presented in~\cite{BCCL2010}.  One of their features
is that they can find a telescoper for a given term $f$ without also computing
the corresponding certificate. This is interesting because a certificate is not
always needed, and it is typically much larger (and thus computationally more
expensive) than the telescoper, so we may not want to compute it if we don't
have to.

Reduction-based algorithms have been first developed in the differential case,
for various cases~\cite{BCCL2010,BCCLX2013,BLS2013,CKK2016}. The basic idea, formulated for the shift
case, is as follows. Let~$\cC$ be a field of characteristic zero. Suppose we know a function~$\redy(\cdot)$, called \emph{reduction}, with the property that for all $f$ in the domain under consideration, say $\bD$, containing $\cC(x,y)$, there exists a $g$ in the same
domain such that $f-\redy(f)=\sigma_y(g)-g$, i.e., the difference $f-\redy(f)$ is a
summable term. We call $\redy(f)$ a {\em remainder} of $f$ with respect to the reduction~$\redy(\cdot)$.
Then in order to find a telescoper for~$f$, we can compute
$\redy(f),\redy(\sigma_x(f)),\redy(\sigma_x^2(f)),\dots$ until we find a linear dependence over
the field $\cC(x)$. If such a dependence is found, say
$e_0\redy(f)+\cdots+e_\rho\redy(\sigma_x^\rho(f))=0$ for some $e_0,\dots,e_\rho$ in $\cC(x)$, then
$e_0+\cdots+e_\rho\sigma_x^\rho$ is a telescoper for~$f$.

In order to show that this method terminates, one possible approach is to show
that the $\cC(x)$-vector space spanned by $\redy(f), \redy(\sigma_x(f)), \redy(\sigma_x^2(f)), \dots$ for $f\in \bD$ has a finite dimension. Then, as soon as $\rho$
exceeds this dimension, we can be sure that a telescoper will be found. This
approach was taken in~\cite{BCCLX2013,BLS2013,CKK2016}.  As a nice side result, this approach
provides an independent proof of the existence of telescopers, and even a bound
on their order.  In the paper from last year~\cite{CHKL2015}, the authors used a different
approach. Instead of showing that the remainders form a finite-dimensional
vector space, they showed that for every summable term $f$, we have $\redy(f)=0$. This also ensures that the method terminates (assuming that we already know
for other reasons that a telescoper exists), and in fact that it will find the
smallest possible telescoper, but it does not provide a bound on its order.

This discrepancy in the approaches for the differential case and the shift case
is unpleasant. It is not clear why the shift case should require a different
argument. The goal of the present paper is to show that it does not. We will
continue the development of last year's theory to a point where we can also show
that the remainders belong to a finite-dimensional vector space. As a result, we
obtain new bounds for the order of telescopers for hypergeometric terms. We
obtain lower as well as upper bounds. We do not find exactly the same bounds
that are already in the literature~\cite{MoZe2005,AbLe2005}. Comparing our bounds to the
known bounds, it appears that for \lq\lq generic\rq\rq\ input, the values often agree (of
course, because the known bounds are already generically sharp). However, there
are some special examples in which our bounds are better than the known
bounds. On the other hand, our bounds are never worse than the old bounds.

\section{Preliminaries}\label{SEC:pre}
Using the same notations as in \cite{CHKL2015}, we let $\bF$ be a field of characteristic zero, and $\bF(y)$ be the field of rational functions in $y$ over $\bF$. Let $\sigma_y$ be the automorphism that maps $r(y)$ to $r(y+1)$ for every $r \in \bF(y)$. The pair $(\bF(y), \sigma_y)$ is called a difference field. A difference ring extension of $(\bF(y),\sigma_y)$ is a ring $\bD$ containing $\bF(y)$ together with a distinguished endomorphism $\sigma_y\colon\bD\to\bD$ whose restriction to $\bF(y)$ agrees with the automorphism defined before. An element $c\in\bD$ is called a constant if $\sigma_y(c)=c$. We denote by $\deg_y(p)$ the degree of a nonzero polynomial $p \in \bF[y]$.
\begin{definition}\label{DEF:ht}
 Let $\bD$ be a difference ring extension of $\bF(y)$. A nonzero element $T \in \bD$ is called
 a {\em hypergeometric term} over $\bF(y)$ if $\sigma_y(T) = r T$ for some $r \in \bF(y)$. We call $r$ the
 {\em shift quotient} of $T$ w.r.t. $y$.
\end{definition}

A univariate hypergeometric term $T$ is called {\em hypergeometric summable} if there exists another hypergeometric term~$G$ s.t. $T = \Delta_y(G)$, where $\Delta_y$ denotes the difference of $\sigma_y$ and the identity map. We abbreviate \lq\lq hypergeometric summable\rq\rq\ as \lq\lq summable\rq\rq\ in this paper.

Recall \cite[\S 1]{AbPe2001a} that a nonzero polynomial in $\bF[y]$ is said to be {\em shift-free} if no two distinct roots differ by an integer. A nonzero rational function in $\bF(y)$ is said to be {\em shift-reduced} if its numerator is co-prime with any shift of its denominator.

According to \cite{AbPe2001a, AbPe2002b}, for a given hypergeometric term $T$ there always exists a rational function $S\in \bF(y)$ and another hypergeometric term~$H$ whose shift quotient is shift-reduced, s.t.\ $T=SH$. This is called a {\em multiplicative decomposition} of~$T$. We call the shift quotient $K = \sigma_y(H)/H$ a {\em kernel} of~$T$ and $S$ the corresponding {\em shell}.

Based on Abramov and Petkov{\v s}ek's work in~\cite{AbPe2001a,AbPe2002b}, the authors of~\cite{CHKL2015} presented a modified version of Abramov-Petkov{\v s}ek reduction, which determines summability without solving any auxiliary difference equations. To describe it concisely, we first recall some terminology.

Let~$T$ be a hypergeometric term whose kernel is~$K$ and the corresponding shell is~$S$. Then~$T=SH$, where~$H$ is a hypergeometric term whose shift quotient is~$K$. Write~$K = u/v$, where~$u,v$ are polynomials in~$\bF[y]$ with~$\gcd(u,v)=1$.

\begin{definition}\label{DEF:prime}
 A nonzero polynomial $p$ in $\bF[y]$ is said to be {\em strongly prime} with $K$
 if $\gcd\left( p, \sigma_y^{-i}(u) \right) {=} \gcd\left( p, \sigma_y^{i}(v) \right) {=} 1$
 for all $i \ge 0$.
\end{definition}

Now define the $\bF$-linear map $\phi_K$ from $\bF[y]$ to itself by sending $p$ to $u\sigma_y(p)-v p$ for all $p\in \bF[y]$. We call $\phi_K$ the {\em map for polynomial reduction w.r.t.\ $K$}. Let
\[\bW_K = \spanning\{y^\ell \mid \ell \in \bN, \ell \neq \deg_y(p) \text{ for all } p \in \im(\phi_K)\}.\]
Then $\bF[y] = \im(\phi_K) \oplus \bW_K$, and thus we call $\bW_K$ the {\em standard complement of $\im(\phi_K)$}.
    \begin{definition} \label{DEF:residual}
        Let~$f$ be a rational function in $\bF(y)$. Another rational function~$r$ in~$\bF(y)$ is called a {\em (discrete) residual form} of~$f$ w.r.t.~$K$ if there exists~$g\in \bF(y)$ and~$a,b,q$ in~$\bF[y]$ s.t.
        $$f = K \sigma_y(g)-g + r \quad \text{and} \quad r = \frac{a}{b}+\frac{q}{v},$$
        where~$\deg_y(a)<\deg_y(b)$, $\gcd(a,b)=1$, $b$ is shift-free and strongly
        prime with~$K$, and~$q$ belongs to~$\bW_K$. For brevity, we just say that~$r$ is {\em a residual form} w.r.t.~$K$ if~$f$ is clear from the context. We call~$b$ the {\em significant denominator} of~$r$.
    \end{definition}
The modified Abramov-Petkov{\v s}ek reduction \cite[Theorem~4.8]{CHKL2015} can be stated as follows.
\begin{theorem}\label{THM:map}
 With the notations given above, the modified version of the Abramov-Petkov{\v s}ek reduction computes a rational function $g$ in $\bF(y)$ and a residual form $r$ w.r.t.\ $K$, such that
 \begin{equation} \label{EQ:map}
   T = \Delta_y(g H) +  r H.
 \end{equation}
 Moreover, $T$ is summable if and only if $r=0$.
\end{theorem}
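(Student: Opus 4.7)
The plan is to reduce the theorem to a statement purely about the shell $S$ and then perform two successive reductions (one on the proper rational part, one on the polynomial part), before addressing the summability equivalence separately.

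First, I would rewrite \eqref{EQ:map} without $H$. Since $\sigma_y(H)=KH$, one has $\Delta_y(gH) = (K\sigma_y(g)-g)H$. Because $T=SH$, the theorem is equivalent to showing that there exist $g\in \bF(y)$ and $r$ of the prescribed form with
\[
  S \;=\; K\sigma_y(g) - g + r.
\]
So the whole task reduces to a reduction procedure on rational functions in $\bF(y)$ that uses only the datum $K=u/v$.

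Next, I would split $S = P + A/B$ into polynomial and proper parts ($\gcd(A,B)=1$, $\deg_y A < \deg_y B$) and handle each by one of the two reductions of Abramov--Petkov\v{s}ek (modified as in \cite{CHKL2015}).
\emph{(i) Shell reduction of $A/B$.} Factor $B$ into irreducibles and group them into $\sigma_y$-orbits. For each orbit, repeatedly apply the partial-fraction identity
\[
  \frac{c}{\sigma_y^{i}(p)} \;=\; K\sigma_y\!\left(\tfrac{c'}{\sigma_y^{i-1}(p)}\right) - \tfrac{c'}{\sigma_y^{i-1}(p)} + (\text{correction})
\]
to slide every contribution to a single representative of the orbit; this collapses all bad factors (non-shift-free or not strongly prime with $K$) and leaves a residual proper rational $a/b$ with $b$ shift-free and strongly prime with $K$, while pushing everything else into $g$ or into the polynomial part.
\emph{(ii) Polynomial reduction.} Using the direct-sum decomposition $\bF[y] = \im(\phi_K) \oplus \bW_K$, write the accumulated polynomial as $v\tilde P = \phi_K(\tilde p) + q$ with $q\in \bW_K$. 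Then
\[
  \tilde P \;=\; \frac{u\sigma_y(\tilde p) - v\tilde p}{v} + \frac{q}{v} \;=\; K\sigma_y(\tilde p)-\tilde p + \frac{q}{v},
\]
so $\tilde p$ is absorbed into $g$ and $q/v$ becomes the second summand of $r$.

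Finally, for the equivalence, the \emph{if} direction is immediate from \eqref{EQ:map}. For \emph{only if}, I would invoke the Abramov--Petkov\v{s}ek characterization that $T=SH$ is summable iff $K\sigma_y(g^\ast)-g^\ast = S$ admits a rational solution $g^\ast\in\bF(y)$. Subtracting the identity produced by the reduction gives $K\sigma_y(g^\ast-g)-(g^\ast-g) = r$, so $r$ itself is the shell of a summable term. The hard part is then to show that the only residual form that is summable (with a rational certificate) is $r=0$. This requires a careful denominator analysis: if $K\sigma_y(h)-h$ equalled a nonzero $a/b$ with $b$ shift-free and strongly prime with $K$, then tracking irreducible factors of the denominator of $h$ under multiplication by $u$ and $v$ would contradict either the shift-freeness of $b$ or its strong primality with $K$; and for the polynomial piece, clearing denominators against $v$ would force $q\in \im(\phi_K)\cap \bW_K=\{0\}$. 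This uniqueness of residual forms is the technical heart of the argument, and it is exactly what distinguishes the modified reduction from the original Abramov--Petkov\v{s}ek version.
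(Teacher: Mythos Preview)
The paper does not actually prove this theorem. It is stated in the Preliminaries section as a restatement of \cite[Theorem~4.8]{CHKL2015}, introduced by the sentence ``The modified Abramov--Petkov\v{s}ek reduction \cite[Theorem~4.8]{CHKL2015} can be stated as follows,'' and no proof is given. So there is nothing in the present paper to compare your proposal against; Theorem~\ref{THM:map} functions here purely as quoted background.

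That said, your sketch is a reasonable outline of how the proof in \cite{CHKL2015} is organised: reduce to the shell equation $S=K\sigma_y(g)-g+r$, perform the shell (denominator) reduction to force the significant denominator to be shift-free and strongly prime with~$K$, then apply the polynomial reduction via $\bF[y]=\im(\phi_K)\oplus\bW_K$, and finally argue that a residual form which is itself of the shape $K\sigma_y(h)-h$ must vanish. One caution on terminology: your closing phrase ``uniqueness of residual forms'' is potentially misleading in the context of this paper, which explicitly emphasises (opening of Section~\ref{SEC:remprops}) that residual forms are \emph{not} unique in the shift case; what you actually need, and what you correctly identify just before that phrase, is the weaker statement that the only residual form equal to some $K\sigma_y(h)-h$ is zero. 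The denominator-tracking argument you outline for this step is the right idea, but it is genuinely delicate (one must separately rule out contributions from the proper part and then force $h$ to be polynomial so that $q\in\im(\phi_K)\cap\bW_K$), and a full write-up would need to make those case analyses explicit.
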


\section{Properties of residual forms}\label{SEC:remprops}
In this section, we will explore important properties of residual forms, which enables us to derive nontrivial relationship among remainders in Section~\ref{SEC:remainder}.

Unlike the differential case, a rational function may have more than one residual form in shift case. However, these residual forms are related to each other in some way. Before describing it, let us recall some technology.

Recall~\cite[\S 2]{AbPe2002b} that polynomials~$p_1, p_2 \in \bF[y]$ are said to be {\em shift-equivalent} (w.r.t.\ $y$) if~$p_1 = \sigma_y^\ell(p_2)$ for some~$\ell \in \bZ$, denoted by~$p_1 \sim_y p_2$. It is an equivalence relation.

Let~$f$ be a rational function in $\bF(y)$. We call the rational function pair~$(K, S) \in \bF(y)^2$ a {\em rational normal form (RNF)} of~$f$ if~$f = K \cdot \sigma_y(S)/S$ and~$K$ is shift-reduced. By \cite[Theorem~1]{AbPe2002b}, every rational function has at least one RNF. Let~$T$ be a hypergeometric term over~$\bF(y)$. It is not hard to see that $(K,S)\in \bF(y)^2$ is an RNF of~$\sigma_y(T)/T$ if and only if~$K$ and $S$ are a kernel and the corresponding shell of~$T$.

\begin{definition}\label{DEF:relatedness}
	Two shift-free polynomials $p, q \in \bF[y]$ are called {\em shift-related} (w.r.t.\ $y$), denoted by $p \approx_y q$, if for any nontrivial monic irreducible factor $f$ of $p$, there exists a unique monic irreducible factor $g$ of $q$ with the same multiplicity as $f$ in $p$ s.t.\ $f \sim_y g$, and vice versa.
\end{definition}
One can show that~$\approx_y$ is an equivalence relation.
\begin{prop}\label{PRO:uniqueness}
	Let~$K$ be a shift-reduced rational function in~$\bF(y)$. Assume that~$r_1, r_2\in \bF(y)$ are both residual forms of the same rational function in~$\bF(y)$ w.r.t.\ $K$. Then the significant denominators of~$r_1$ and~$r_2$ are shift-related to each other.
\end{prop}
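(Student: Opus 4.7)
Since both $r_i$ are residual forms of the same rational function $f$, we have $f = K\sigma_y(g_i) - g_i + r_i$ for $i = 1, 2$ by Definition 2.4, and subtracting gives $r_1 - r_2 = K\sigma_y(g) - g$ with $g := g_2 - g_1 \in \bF(y)$. Writing $r_i = a_i/b_i + q_i/v$ in residual form, the goal is to show that every monic irreducible factor of $b_1$ admits a shift-equivalent factor of $b_2$ of the same multiplicity, and symmetrically; the plan is to verify this orbit by orbit.

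Fix a monic irreducible factor $p$ of $b_1$ and consider its shift orbit $O(p) := \{\sigma_y^k(p) : k \in \bZ\}$. Because $b_1$ is strongly prime with $K = u/v$, the orbit $O(p)$ is disjoint from the shift orbits of $u$ and $v$; because $b_1$ and $b_2$ are shift-free, each contains at most one factor lying in $O(p)$. On the left-hand side of $r_1 - r_2 = K\sigma_y(g) - g$, the $O(p)$-part of the partial fraction comes entirely from $a_1/b_1$ (a single pole at $p$) and possibly from $a_2/b_2$ (a single pole at some $\sigma_y^\ell(p)$), since the $q_i/v$ pieces live in the $v$-orbit. On the right, decomposing $g = g_p + g'$ with $g_p$ carrying exactly the partial-fraction summands of $g$ whose poles lie in $O(p)$, the $O(p)$-part coincides with that of $K\sigma_y(g_p) - g_p$, because multiplication by $u/v$ is invertible modulo every shift of $p$ and so cannot pull the contributions of $g'$ into $O(p)$.

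The heart of the argument is an \emph{extremal-pole principle}: if $g_p \ne 0$ and its partial fractions occupy positions $\{\sigma_y^{a_i}(p) : i \in S\}$ with orders $e^{(i)}$ for some finite $S \subseteq \bZ$, set $m := \min S$ and $M := \max S$. At the extreme position $\sigma_y^{m-1}(p)$ only $K\sigma_y(g_p)$ can contribute (since $m-1 \notin S$ but $m \in S$), and its leading partial-fraction coefficient there is nonzero because strong-primeness makes $u,v$ units modulo $\sigma_y^{m-1}(p)$; symmetrically, at $\sigma_y^M(p)$ only $-g_p$ contributes, again with a nonzero leading coefficient. Hence $K\sigma_y(g_p) - g_p$ exhibits at least two distinct $O(p)$-poles whenever $g_p \ne 0$. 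The same no-cancellation observation, applied rung by rung at each interior $m \le j < M$, forces the pole orders at $\sigma_y^j(p)$ and $\sigma_y^{j+1}(p)$ to agree, so $e^{(i)}$ is constant across $S$.

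Matching sides: if no shift of $p$ divides $b_2$, the left has a single $O(p)$-pole, which is incompatible both with $g_p \ne 0$ (at least two poles) and with $g_p = 0$ (no poles)---contradiction. Otherwise the unique shift $\sigma_y^\ell(p)$ dividing $b_2$ accounts for the second pole, and the constant-order conclusion forces the multiplicities of $p$ in $b_1$ and of $\sigma_y^\ell(p)$ in $b_2$ to coincide. Symmetric reasoning starting from an irreducible factor of $b_2$ completes the proof that $b_1 \approx_y b_2$. The main technical hurdle is the extremal-pole principle: verifying that strong-primeness genuinely prevents cancellation of leading partial-fraction coefficients after multiplication by $u/v$, and propagating this rung by rung to obtain constancy of the orders along $S$.
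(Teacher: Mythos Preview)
Your overall strategy---an extremal--pole chase along the shift orbit of an irreducible factor $p$ of $b_1$---is precisely the idea the paper uses. The gap is in how you invoke strong primeness.

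You claim that ``the orbit $O(p)$ is disjoint from the shift orbits of $u$ and $v$'' and later that ``strong-primeness makes $u,v$ units modulo $\sigma_y^{m-1}(p)$''. Strong primeness (Definition~\ref{DEF:prime}) does not say this: it only gives $\gcd(\sigma_y^j(p),u)=1$ for $j\ge 0$ and $\gcd(\sigma_y^j(p),v)=1$ for $j\le 0$. A \emph{negative} shift of $p$ may divide $u$ and a \emph{positive} shift may divide $v$. Consequently your separation of the $O(p)$-part of $K\sigma_y(g)-g$ from the rest is not justified (if some $\sigma_y^k(p)\mid v$ then both $(q_1-q_2)/v$ and $K\sigma_y(g')$ can contribute $O(p)$-poles), and at the extremal positions you cannot conclude that the leading coefficient survives multiplication by $u/v$. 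The rung-by-rung propagation of orders has the same defect.

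The paper circumvents this by first clearing $v$, working with $u\sigma_y(g)-vg$ so that all coefficients are polynomials, and then matching the direction of the extremal argument to the one-sidedness of strong primeness: when $f^\alpha\mid g_d$ it passes to the \emph{maximal} $m$ with $\sigma_y^m(f)^\alpha\mid g_d$, so that $m+1\ge 1$ and hence $\gcd(\sigma_y^{m+1}(f),u)=1$ is available; the companion case $f^\alpha\mid\sigma_y(g_d)$ takes the minimal shift and uses coprimality of $v$ with nonpositive shifts. Your argument is repairable along exactly these lines, but as written the two-sided use of strong primeness is incorrect. (There is also a sign slip: with $\sigma_y(r)(y)=r(y+1)$, a pole of $g_p$ at $\sigma_y^a(p)$ moves to $\sigma_y^{a+1}(p)$ in $\sigma_y(g_p)$, so the fresh extremal pole of $K\sigma_y(g_p)$ sits at $\sigma_y^{M+1}(p)$, not $\sigma_y^{m-1}(p)$.)
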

\begin{proof}
Assume that~$r_1,r_2$ are of the forms
\begin{equation*}\label{EQ:drfs}
r_1 = \frac{a_1}{b_1} + \frac{q_1}{v} \quad \text{and}\quad r_2 = \frac{a_2}{b_2} + \frac{q_2}{v},
\end{equation*}
where~$a_i, b_i \in \bF[y]$, $\deg(a_i) < \deg(b_i)$,~$\gcd(a_i,b_i) = 1$,~$b_i$ is shift-free and strongly prime with~$K$, $q_i\in \bW_{K}$ for $i = 1,2$, and~$v$ is the denominator of~$K$. Since~$r_1, r_2$ are both residual forms of the same rational function, there exists $g\in\bF(y)$ s.t.
\[r_2 = K \sigma_y(g) - g + r_1.\]
It follows that
\begin{equation}\label{EQ:drfrel}
\frac{a_2 v}{b_2} = u \sigma_y(g) - v (g) + (q_1 - q_2) + \frac{a_1 v}{b_1}.
\end{equation}
Let~$f\in \bF[y]$ be a nontrivial monic irreducible factor of~$b_1$ with multiplicity~$\alpha>0$. If~$f^\alpha$ divides~$b_2$, then we are done. Otherwise, let~$g_d$ be the denominator of~$g$. Then~$f^\alpha$ divides~$g_d$ or~$\sigma_y(g_d)$ as~$\gcd(b_1, a_1 v) = 1$. If $f^\alpha$ divides~$g_d$, let
\begin{equation*}
 m =  \max \{\ell \in \bZ \mid \sigma_y^{\ell}(f)^\alpha \text{ divides } g_d\}.
\end{equation*}
Then~$m \geq 0$ and~$\sigma_y^{m+1}(f)^\alpha \mid \sigma_y(g_d)$. Since~$b_1$ is strongly prime with~$K$,~$\gcd(\sigma_y^{m+1}(f)^\alpha, u) = 1$. Apparently, neither~$b_1$ nor~$g_d$ is divisible by~$\sigma_y^{m+1}(f)^\alpha$ as~$b_1$ is shift-free and~$m$ is maximal. Hence~\eqref{EQ:drfrel} implies~$\sigma_y^{m+1}(f)^\alpha$ is the required factor of~$b_2$. Similarly, we can show that~$\sigma_y^{m}(f)^\alpha$ with
\[m =  \min \{\ell \in \bZ \mid \sigma_y^{\ell}(f)^\alpha \text{ divides } g_d\} \leq -1,\]
is the required factor of~$b_2$, if~$f^\alpha$ divides~$\sigma_y(g_d)$.

In summary, there always exists a monic irreducible factor of~$b_2$ with multiplicity at least~$\alpha$ s.t.\ it is shift-equivalent to~$f$. Due to the shift-freeness of~$b_2$, this factor is unique. Conversely, the proof proceeds in a similar way as above. According to the definition,~$b_1\approx_y b_2$.
\end{proof}
Given a hypergeometric term, it is readily seen that the above proposition reveals the relationship between two residual forms w.r.t.\ the same kernel. To study the case with different kernels, we need to develop two lemmas.

\begin{lemma}\label{LEM:vrelatedness}
    Let~$(K,S)$ be an RNF of a rational function~$f$ in~$\bF(y)$ and~$r$ be a residual form of~$S$ w.r.t.\ $K$. Write
    \[K = \frac{u}{v} \quad \text{with}\ u, v\in \bF[y]\text{ and } \gcd(u,v)=1.\]
    Assume that $p\in \bF[y]$ is a nontrivial monic irreducible factor of~$v$ with multiplicity~$\alpha> 0$. Then
    \[(K', S') = \left(\frac{u}{v'\sigma_y(p)^\alpha}, p^\alpha S\right)\]
    is an RNF of~$f$, in which~$v'= v/p^\alpha$. Moreover, there exists a residual form~$r'$ of~$S'$  w.r.t.\ $K'$ whose significant denominator is equal to that of~$r$.
\end{lemma}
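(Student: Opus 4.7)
My plan is to verify the two claims in turn. For the RNF part I would perform a direct algebraic check: using $v = v' p^\alpha$, the shift of $S' = p^\alpha S$ produces a $\sigma_y(p)^\alpha$ factor that exactly cancels the one introduced in the denominator of $K'$, yielding $K'\sigma_y(S')/S' = K\sigma_y(S)/S = f$. Shift-reducedness of $K'$ follows because the union of shifts of irreducible factors of $v'\sigma_y(p)^\alpha$ coincides with that of $v$---only the representative of the orbit of $p$ has moved from $p$ to $\sigma_y(p)$---so coprimality of $u$ with every shift of $v$ transfers directly to $v'\sigma_y(p)^\alpha$.

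For the second claim I would construct $r'$ explicitly from $r$. Starting from $S = K\sigma_y(g) - g + r$ and exploiting the key identity $p^\alpha K = u/v' = K'\sigma_y(p)^\alpha$, multiplication by $p^\alpha$ gives
\[
S' \;=\; K'\sigma_y(p^\alpha g) - p^\alpha g + p^\alpha r,
\]
so the task reduces to rewriting $p^\alpha r = (p^\alpha a)/b + q/v'$ in residual-form shape for $K'$ while keeping the significant denominator equal to $b$. The rational part is handled by a Euclidean division $p^\alpha a = bc + d$ with $\deg_y(d) < \deg_y(b)$, and $\gcd(d,b)=1$ follows from $\gcd(a,b)=1$ together with $\gcd(b,p)=1$ (itself a consequence of the strong primality of $b$ with $K$). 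Absorbing $c$ into the second term produces a polynomial numerator $Q = (cv' + q)\sigma_y(p)^\alpha$ over the new denominator $v'\sigma_y(p)^\alpha$; decomposing $Q = \phi_{K'}(t) + w$ with $w \in \bW_{K'}$ peels off a further telescoping piece $K'\sigma_y(t) - t$. Setting $g' = p^\alpha g + t$ and $r' = d/b + w/(v'\sigma_y(p)^\alpha)$ then gives $S' = K'\sigma_y(g') - g' + r'$.

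The remaining residual-form conditions on $r'$ are mostly inherited: $b$ is still shift-free, and its strong primality with $K'$ follows because each nonnegative shift of $v'\sigma_y(p)^\alpha$ divides a nonnegative shift of $v$ up to the index adjustment $p \mapsto \sigma_y(p)$. The main obstacle I anticipate is justifying that the significant denominator is \emph{exactly} $b$ and not merely shift-related---Proposition~\ref{PRO:uniqueness} on its own only supplies the latter. In the construction above this rigidity holds by design, since both the Euclidean division and the polynomial reduction $\phi_{K'}$ act only on the $v'\sigma_y(p)^\alpha$-branch of the fraction and never touch $b$; this is the payoff of building $r'$ explicitly rather than invoking uniqueness.
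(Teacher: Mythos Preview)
Your proposal is correct and follows essentially the same route as the paper: multiply the reduction identity for $S$ by $p^\alpha$, use $p^\alpha K = K'\sigma_y(p)^\alpha$ to recognise a telescoping piece for $K'$ with witness $p^\alpha g$, and then massage $p^\alpha r$ into residual-form shape for $K'$ while keeping the significant denominator equal to $b$. The only difference is packaging: the paper delegates the ``massaging'' step (your Euclidean division $p^\alpha a = bc + d$ followed by the polynomial reduction $Q = \phi_{K'}(t) + w$) to Lemma~4.2 and Remark~4.3 of~\cite{CHKL2015}, whereas you spell it out explicitly; likewise the paper dispatches shift-reducedness of $K'$ and strong primality of $b$ with $K'$ in one line each, while you give the orbit/shift argument in full.
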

\begin{proof}
Since~$K$ is shift-reduced, so is~$K'$. Then the first assertion follows by noticing
 \[K\frac{\sigma_y(S)} {S} = \frac{u}{v' p^\alpha}\frac{\sigma_y(S)}{S} = \frac{u}{v' \sigma_y(p)^\alpha}\frac{\sigma_y(p^\alpha S) }{p^\alpha S}=K'\frac{\sigma_y(S')}{S'}.\]
Let~$r$ be of the form~$r = a/b+q/v$ where~$a, b \in \bF[y]$, $\deg(a) < \deg(b)$,~$\gcd(a,b) = 1$,~$b$ is shift-free and strongly prime with~$K$, and $q\in \bW_{K}$. Then there exists~$g\in \bF(y)$ s.t.
\[S = K \sigma_y(g) - g + \frac{a}{b} + \frac{q}{v'p^\alpha},\]
which implies that
\begin{align*}
      S' &= p^\alpha S = p^\alpha K \sigma_y(g) - p^\alpha g + \frac{ap^\alpha }{b} + \frac{q}{v'}\\
      &= \frac{u}{v'\sigma_y(p)^\alpha} \sigma_y(p^\alpha g) - p^\alpha g + \frac{ap^\alpha}{b} + \frac{q\sigma_y(p)^\alpha}{v'\sigma_y(p)^\alpha}\\
       &=K'\sigma_y(p^\alpha g) - p^\alpha g + \frac{ap^\alpha }{b} + \frac{q\sigma_y(p)^\alpha}{v'\sigma_y(p)^\alpha}
\end{align*}
Since~$b$ is strongly prime with~$K$ and~$\gcd(a,b)=1$, we have~$\gcd(ap^\alpha, b)=1$. According to Lemma~4.2 and Remark~4.3 in~\cite{CHKL2015}, there exist~$g'\in \bF(y), a', q' \in \bF[y]$ with $\deg_y(a') < \deg_y(b)$ and $\gcd(a',b)=1$, and~$q' \in \bW_{K'}$ s.t.
\[S'= K'\sigma_y(g')-g' + \left(\frac{a'}{b}+ \frac{q'}{v' \sigma_y(p)^\alpha}\right).\]
Note that~$b$ is strongly prime with~$K$, so~$b$ is also strongly prime with~$K'$. By the shift-freeness of~$b$,
\[\frac{a'}{b}+ \frac{q'}{v' \sigma_y(p)^\alpha}\]
is a residual form of~$S'$ w.r.t.\ $K'$. The lemma follows.
\end{proof}
\begin{lemma}\label{LEM:urelatedness}
    Let~$(K,S)$ be an RNF of a rational function~$f$ in~$\bF(y)$ and~$r$ be a residual form of~$S$ w.r.t.\ $K$. Write
    \[K = \frac{u}{v} \quad \text{with}\ u, v\in \bF[y]\text{ and } \gcd(u,v)=1.\]
    Assume that $p\in \bF[y]$ is a nontrivial monic irreducible factor of~$u$ with multiplicity~$\alpha> 0$. Then
    \[(K', S') = \left(\frac{u'\sigma_y^{-1}(p)^\alpha }{v}, \sigma_y^{-1}(p)^\alpha S\right)\]
    is an RNF of~$f$, in which~$u'= u/p^\alpha$. Moreover, there exists a residual form~$r'$ of~$S'$  w.r.t.\ $K'$ whose significant denominator is equal to that of~$r$.
\end{lemma}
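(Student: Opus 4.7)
The plan is to mirror the proof of Lemma~\ref{LEM:vrelatedness}, with the only twist being that we remove a factor from the numerator $u$ of $K$ rather than from the denominator $v$, so that the new shell picks up a \emph{backward} shift of $p$. Accordingly, I would split the argument into two stages: (i) show that $(K',S')$ is again an RNF of $f$, and (ii) transform the given residual form $r$ of $S$ w.r.t.\ $K$ into a residual form $r'$ of $S'$ w.r.t.\ $K'$ with the same significant denominator.

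For stage (i), I would check shift-reducedness of $K'$ by noting that $u'$ divides $u$ while $\sigma_y^{-1}(p)$ divides $\sigma_y^{-1}(u)$; combined with $\gcd(u,\sigma_y^i(v))=1$ for all $i\in\bZ$, this yields $\gcd(u'\sigma_y^{-1}(p)^\alpha,\sigma_y^i(v))=1$ for all $i\in\bZ$. A direct calculation then gives
\[
K'\,\frac{\sigma_y(S')}{S'}
\;=\;\frac{u'\sigma_y^{-1}(p)^\alpha}{v}\cdot\frac{p^\alpha\sigma_y(S)}{\sigma_y^{-1}(p)^\alpha S}
\;=\;\frac{u}{v}\cdot\frac{\sigma_y(S)}{S}
\;=\;K\,\frac{\sigma_y(S)}{S},
\]
confirming that $(K',S')$ is an RNF of $f$.

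For stage (ii), write $r=a/b+q/v$ with the usual properties and start from $S=K\sigma_y(g)-g+r$. Setting $\tilde g := \sigma_y^{-1}(p)^\alpha g$, a brief computation using $\sigma_y(\tilde g)=p^\alpha\sigma_y(g)$ shows that $\sigma_y^{-1}(p)^\alpha K\sigma_y(g) = K'\sigma_y(\tilde g)$, so that multiplying the identity for $S$ by $\sigma_y^{-1}(p)^\alpha$ gives
\[
S' \;=\; K'\sigma_y(\tilde g)-\tilde g \;+\; \sigma_y^{-1}(p)^\alpha\!\left(\frac{a}{b}+\frac{q}{v}\right).
\]
I would next verify that $b$ remains shift-free and strongly prime with $K'$: shift-freeness is unchanged, while strong primality follows because both $u'$ and $\sigma_y^{-1}(p)$ are divisors of iterated backward shifts of $u$, and the denominator of $K'$ is still $v$. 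With these facts in hand, the final step is the same as in Lemma~\ref{LEM:vrelatedness}: after performing a polynomial division of $\sigma_y^{-1}(p)^\alpha a$ by $b$, invoking Lemma~4.2 and Remark~4.3 of \cite{CHKL2015} converts the trailing bracket into $K'\sigma_y(g'')-g''+\bigl(a'/b+q'/v\bigr)$ with $\deg_y(a')<\deg_y(b)$, $\gcd(a',b)=1$, and $q'\in\bW_{K'}$. Absorbing $g''$ into $\tilde g$ produces the desired residual form $r'=a'/b+q'/v$, whose significant denominator is the same $b$.

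The only real (and mild) obstacle is careful bookkeeping of the shift conditions: since the new numerator is $u'\sigma_y^{-1}(p)^\alpha$, one must verify that both shift-reducedness of $K'$ and strong primality of $b$ with $K'$ survive this backward shift, using that strong primality of $b$ with $K$ gives coprimality with all non-positive shifts of $u$ as well. Once these conditions are in place, the rest of the proof is a mechanical mirror of Lemma~\ref{LEM:vrelatedness}.
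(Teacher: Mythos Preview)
Your proposal is correct and follows exactly the approach the paper intends: the paper's own proof is simply ``Similar to Lemma~\ref{LEM:vrelatedness},'' and your argument is precisely the natural mirror of that lemma with the factor moved from the denominator to the numerator via a backward shift. The bookkeeping you flag (shift-reducedness of $K'$ and strong primality of $b$ with $K'$, both following because $u'$ and $\sigma_y^{-1}(p)$ divide non-positive shifts of $u$) is the only point requiring care, and you handle it correctly.
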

\begin{proof}
    Similar to Lemma~\ref{LEM:vrelatedness}.
\end{proof}
\begin{prop} \label{PROP:sd}
    Let~$(K,S)$ be an RNF of a rational function~$f$ in~$\bF(y)$ and $r$ be a residual form of~$S$ w.r.t.\ $K$. Then there exists another RNF $(\tilde K, \tilde S)$ of~$f$ such that
\begin{enumerate}
\item $\tilde K$ has shift-free numerator and shift-free denominator;
\item there exists a residual form~$\tilde r$ of~$\tilde S$ w.r.t.\ $\tilde K$ whose significant denominator is equal to that of~$r$.
\end{enumerate}
\end{prop}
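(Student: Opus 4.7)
The plan is to iterate Lemma~\ref{LEM:vrelatedness} until the denominator of~$K$ becomes shift-free, then iterate Lemma~\ref{LEM:urelatedness} until the numerator also becomes shift-free. Each single application produces a new RNF of the same rational function~$f$ together with a residual form whose significant denominator coincides with the previous one, so the only real issue is termination.

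Partition the distinct monic irreducible factors of the denominator~$v$ into shift-equivalence classes, write the shifts appearing inside a class as $k_1<\cdots<k_m$ (relative to a chosen representative~$p$), and define the \emph{spread} of the class to be $k_m-k_1$. Let $N(v)$ be the sum of spreads over all classes; then $v$ is shift-free iff $N(v)=0$. If $N(v)>0$, pick any non-singleton class and apply Lemma~\ref{LEM:vrelatedness} to the factor $\sigma_y^{k_1}(p)^\alpha$ of minimum shift in it. The lemma replaces this factor in the new denominator by $\sigma_y^{k_1+1}(p)^\alpha$: either $k_1+1<k_2$ and the shifts of the class simply contract to $(k_1+1,k_2,\dots,k_m)$, or $k_1+1=k_2$ and the factor collides with the existing $\sigma_y^{k_2}(p)$, merging into $\sigma_y^{k_2}(p)^{\alpha+\beta}$ (where $\beta$ is the old multiplicity at~$k_2$). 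In either case the new class spread is $k_m-(k_1+1)$, so $N$ drops by~$1$. Induction on~$N(v)$ yields an RNF $(K_1,S_1)$ with shift-free denominator and a residual form of~$S_1$ w.r.t.~$K_1$ whose significant denominator equals that of~$r$.

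Now run the mirror-image procedure on the numerator, using Lemma~\ref{LEM:urelatedness}, which replaces a chosen factor $p^\alpha$ of~$u$ by $\sigma_y^{-1}(p)^\alpha$ and leaves the denominator untouched. Within each non-singleton shift-equivalence class of factors of~$u$, pick the factor with the \emph{largest} shift and apply the lemma; the analogous quantity $N(u)$ strictly decreases by~$1$ each time, and the shift-freeness of the denominator is not disturbed because Lemma~\ref{LEM:urelatedness} does not modify it. After finitely many steps one reaches the desired RNF $(\tilde K,\tilde S)$ together with the required residual form~$\tilde r$.

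The only slightly delicate point is confirming that $N$ really does drop by~$1$ in \emph{both} sub-cases (no-collision and collision) of one application of Lemma~\ref{LEM:vrelatedness}; beyond that, shift-reducedness of all intermediate RNFs and the invariance of the significant denominator are already encapsulated in Lemmas~\ref{LEM:vrelatedness} and~\ref{LEM:urelatedness}, so there is nothing more to check.
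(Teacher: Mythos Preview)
Your proof is correct and follows essentially the same route as the paper: iterate Lemma~\ref{LEM:vrelatedness} on the lowest shift in a non-singleton class until the denominator becomes shift-free, then do the mirror procedure with Lemma~\ref{LEM:urelatedness} on the numerator. The only difference is presentational: you package the termination argument via the explicit invariant~$N(v)$ (sum of spreads), whereas the paper argues it more informally by tracking one class at a time until only the top shift~$\sigma_y^m(p)$ survives; both boil down to the same observation that each application of the lemma shortens the relevant class by one step.
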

\begin{proof}
Let~$K=u/v$ with~$u, v\in \bF[y]$ and~$\gcd(u,v)=1$, and~$b$ be the significant denominator of~$r$.

Assume that~$v$ is not shift-free. Then there exist two nontrivial monic irreducible factors~$p$ and~$\sigma_y^m(p)$ $(m>0)$ of~$v$ with multiplicity~$\alpha>0$ and~$\beta>0$, respectively. W.L.O.G.,
suppose further that~$\sigma_y^{\ell}(p)$ is not a factor of~$v$ for all~$\ell <0$ and~$\ell >m$. By Lemma~\ref{LEM:vrelatedness}, $f$ has an RNF~$(K^\prime, S^\prime)$, in which~$K^\prime$ has a denominator~$v^\prime=\tilde{v} \sigma_y(p)^\alpha$,
where~$\tilde{v}=v/p^\alpha$, and the numerator remains to be~$u$. Moreover, there exists a residual form of~$S^\prime$ w.r.t.\ $K^\prime$ whose significant denominator is~$b$.
If~$m=1$, $\sigma_y(p)$ is an irreducible factor of~$v^\prime$ with multiplicity~$\alpha+\beta$. Otherwise, it is an irreducible factor of~$v^\prime$ with multiplicity~$\alpha$.
More importantly,~$\sigma_y^{\ell}(p)$ is not a factor of~$v^\prime$ for all~$\ell <1$. Iteratively using the argument, we arrive at an RNF of~$f$ such that
$\sigma_y^m(p)$ divides the denominator of the new kernel with certain multiplicity but~$\sigma_y^i(p)$ does not whenever~$i\neq m$.  Moreover, there exists a residual form of the new shell with
respect to the new kernel whose significant denominator is equal to~$b$. Applying the same argument to each irreducible factor, we can obtain an RNF of~$f$ whose kernel has a shift-free
denominator and whose shell has a residual form with significant denominator~$b$.

With Lemma~\ref{LEM:urelatedness}, one can obtain an RNF of~$f$ whose kernel has a shift free numerator whose shell has a residual form with significant denominator~$b$.
\end{proof}
A nonzero rational function is said to be {\em shift-free} if it is shift-reduced and its denominator and numerator are both shift-free. The main result is given below.
\begin{prop} \label{PRO:sf}
Let~$(K,S)$ and~$(K^\prime, S^\prime)$ be two RNF's of a rational function~$f$ in~$\bF(y)$, $r$ and~$r^\prime$ be residual forms of~$S$ (w.r.t.\ $K$) and~$S^\prime$ (w.r.t.\ $K'$), respectively.
Then the significant denominators of~$r$ and~$r^\prime$ are shift-related.
\end{prop}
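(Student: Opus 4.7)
The plan is to reduce to the case where the two kernels coincide and then invoke the uniqueness already established for a single kernel. By Proposition~\ref{PROP:sd}, we may replace $(K,S)$ and $(K',S')$ by RNFs $(\tilde{K},\tilde{S})$ and $(\tilde{K}',\tilde{S}')$ of $f$ whose kernels are both shift-free and which admit residual forms $\tilde{r}$ and $\tilde{r}'$ with the same significant denominators as $r$ and $r'$, respectively. Since $\approx_y$ is an equivalence relation, it suffices to prove that the significant denominators of $\tilde{r}$ and $\tilde{r}'$ are shift-related.

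The identity $f = \tilde{K}\,\sigma_y(\tilde{S})/\tilde{S} = \tilde{K}'\,\sigma_y(\tilde{S}')/\tilde{S}'$ gives $\tilde{K}=\tilde{K}'\cdot\sigma_y(h)/h$ with $h:=\tilde{S}'/\tilde{S}$. I will analyze this relation orbit by orbit under the action of $\sigma_y$ on irreducible polynomials: since both $\tilde{K}$ and $\tilde{K}'$ are shift-free, each such orbit contains at most one nonzero-exponent factor on each side, and a telescoping comparison of the multiplicities in $\tilde{K}$, $\tilde{K}'$, and $\sigma_y(h)/h$ forces these two factors (when present) to have the same exponent, differing only in their position along the orbit. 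Each single-step discrepancy in position is exactly a transformation provided by Lemma~\ref{LEM:vrelatedness} (for a denominator factor, shifting forward) or by Lemma~\ref{LEM:urelatedness} (for a numerator factor, shifting backward). Combining forward applications of these lemmas with Proposition~\ref{PRO:uniqueness} to absorb the shifts in the opposite direction, I construct a finite chain of RNFs $(\tilde{K}_i,\tilde{S}_i)$ starting from $(\tilde{K},\tilde{S})$ and ending at an RNF of $f$ whose kernel is exactly $\tilde{K}'$, such that along each edge the significant denominators of the respective residual forms are shift-related.

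At the end of the chain, the shells $\tilde{S}_N$ and $\tilde{S}'$ belong to RNFs of $f$ with the same kernel $\tilde{K}'$, so $\tilde{S}_N/\tilde{S}'$ is $\sigma_y$-invariant and hence a nonzero constant in $\bF$; any residual form of $\tilde{S}_N$ w.r.t.\ $\tilde{K}'$ is then a constant multiple of a residual form of $\tilde{S}'$, and a final application of Proposition~\ref{PRO:uniqueness} closes the remaining gap up to $\approx_y$. Transitivity of $\approx_y$ along the chain finishes the proof. The main obstacle I foresee is the combinatorial bookkeeping of the chain, in particular verifying that at each intermediate step the factor to be shifted actually occurs in the current kernel with the correct multiplicity so that the relevant lemma is applicable, and that backward shifts are legitimately simulated by running a forward sequence in reverse and appealing to Proposition~\ref{PRO:uniqueness}; the orbit-by-orbit analysis reduces this to an essentially organizational induction rather than a conceptual difficulty.
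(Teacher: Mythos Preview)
Your proposal is correct and follows essentially the same route as the paper: reduce to shift-free kernels via Proposition~\ref{PROP:sd}, align the two kernels factor by factor using Lemmas~\ref{LEM:vrelatedness} and~\ref{LEM:urelatedness}, and finish with Proposition~\ref{PRO:uniqueness}. Two minor differences are worth noting. First, the paper does not redo your orbit-by-orbit telescoping argument but simply invokes \cite[Theorem~2]{AbPe2002b}, which already asserts that two shift-free RNFs of the same rational function have shift-related numerators and shift-related denominators; your direct derivation is fine and more self-contained, but the citation is shorter. Second, instead of building a one-sided chain from $(\tilde K,\tilde S)$ to kernel $\tilde K'$ and simulating ``backward'' shifts by reversing forward ones, the paper modifies \emph{both} RNFs: for each irreducible factor one assumes w.l.o.g.\ that the shift goes in the direction the lemma provides (swap roles otherwise) and moves only that side, arriving at two RNFs with literally equal kernels and significant denominators still $b$ and $b'$. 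This sidesteps the bookkeeping you flag as the main obstacle and makes the final appeal to Proposition~\ref{PRO:uniqueness} immediate.
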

\begin{proof}
Let~$b$ and~$b^\prime$ be the significant denominators of~$r$ and~$r^\prime$, respectively.
By the above proposition, there exist two RNF's~$(\tilde K, \tilde S)$ and~$(\tilde{K}^\prime, \tilde{S}^\prime)$ of~$f$ such that their kernels are shift-free and
their shells have residual forms whose significant denominators are~$b$ and~$b^\prime$, respectively.

According to~\cite[Theorem~2]{AbPe2002b}, the respective denominators~$\tilde v$ and~$\tilde{v}^\prime$ of~$\tilde K$ and~$\tilde{K}^\prime$ are shift-related. Thus, for a nontrivial monic irreducible factor~$p$ of~$\tilde v$ with multiplicity~$\alpha>0$, there exists a unique factor~$\sigma_y^\ell(p)$ of~$\tilde{v}^\prime$ with the same multiplicity. W.L.O.G., we may assume~$\ell \le 0$. Otherwise, we can switch the roles of~$(\tilde K, \tilde S)$ and~$(\tilde{K}^\prime, \tilde{S}^\prime)$.
If~$\ell<0$, a repeated use of Lemma~\ref{LEM:vrelatedness} leads to a new RNF~$(\tilde{K}^{\prime\prime}, \tilde{S}^{\prime\prime})$ from~$(\tilde{K}^\prime, \tilde{S}^\prime)$
such that~$\tilde{K}^{\prime\prime}$ is shift-free, $p$ is a factor of the denominator of~$\tilde{K}^{\prime\prime}$ with the same multiplicity.

Applying the above argument to each irreducible factor and using Lemma~\ref{LEM:urelatedness} for numerators in the same fashion, we can obtain two new RNF's whose kernels are equal and whose shells
have respective residual forms with significant denominators~$b$ and~$b^\prime$. It follows that~$b$ and~$b^\prime$ are shift-related by Proposition~\ref{PRO:uniqueness}.
\end{proof}

\section{Telescoping via reductions}\label{SEC:telescoping}
We now translate terminology concerning univariate hypergeometric terms to bivariate ones. Let $\cC$ be a field of characteristic zero, and $\cC(x,y)$ be the field of rational functions in $x$ and $y$ over $\cC$. Let $\sigma_x, \sigma_y$ be the shift operators w.r.t.\ $x$ and $y$, respectively, defined by,
\[\sigma_x(f(x,y))=f(x+1,y) \text{ and } \sigma_y(f(x,y)) = f(x,y+1),\]
for any $f$ in $\cC(x,y)$. Then the pair $(\cC(x,y), \{\sigma_x,\sigma_y\})$ forms a partial difference field.
\begin{definition}\label{DEF:biht}
 Let $\bD$ be a partial difference ring extension of~$\cC(x,y)$. A nonzero element $T\in \bD$ is called a {\em hypergeometric term} over $\cC(x,y)$ if there exist $f,g\in \cC(x,y)$ s.t.~$\sigma_x(T) = fT$ and $\sigma_y(T) = gT$. We call $f$ and $g$ the $x$-shift quotient and $y$-shift quotient of $T$, respectively.
\end{definition}
An irreducible polynomial $p$ in $\cC[x,y]$ is called {\em integer-linear} over $\cC$ if there exists a univariate polynomial $P\in \cC[z]$ and two integers $\lambda,\mu$ s.t.\ $p = P(\lambda x+\mu y)$. A polynomial in~$\cC[x,y]$ is called {\em integer-linear} over $\cC$ if all its irreducible factors are integer-linear. A rational function in $\cC(x,y)$ is called {\em integer-linear} over $\cC$ if its denominator and numerator are both integer-linear.

Let $\cC(x)\langle S_x\rangle$ be the ring of linear recurrence operators in~$x$, in which the commutation rule is that $S_x r = \sigma_x(r) S_x$ for all~$r \in \cC(x)$. The application of an operator $L = \sum_{i=0}^\rho e_i S_x^i\in \cC(x)\langle S_x\rangle$ to a hypergeometric term $T$ is defined as $L(T) = \sum_{i=0}^\rho e_i\sigma_x^i(T)$.

Given a hypergeometric term $T$ over $\cC(x,y)$, the computational problem of creative telescoping is to construct a nonzero operator $L\in \cC(x)\langle S_x\rangle$ s.t.
\[L(T) = \Delta_y(G),\]
for some hypergeometric term $G$. We call $L$ a {\em telescoper} for~$T$ w.r.t.\ $y$ and $G$ a {\em certificate} for~$L$. To avoid unnecessary duplication, we make a convention.
\begin{notation}\label{CON:convention}
  Let~$T$ be a hypergeometric term over $\cC(x, y)$ with a multiplicative decomposition $SH$, where $S$ is in $\cC(x,y)$ and $H$ is a hypergeometric term whose $y$-shift quotient~$K$ is shift-reduced w.r.t.~$y$. 
  By~\cite[Theorem~8]{AbPe2001b}, we know~$K$ is integer-linear over~$\cC$. Write~$K = u/v$ where~$u, v\in \cC(x)[y]$ and~$\gcd(u,v)=1$.
\end{notation}
For hypergeometric terms, telescopers do not always exist. Abramov presented a criterion for determining the existence of telescopers in~\cite[Theorem 10]{Abra2003}. With Convention~\ref{CON:convention}, applying the modified Abramov-Petkov{\v s}ek reduction to $T$ w.r.t.~$y$ yields \eqref{EQ:map}. By Abramov's criterion, $T$ has a telescoper if and only if the significant denominator of~$r$ in \eqref{EQ:map} is integer-linear over $\cC$. Based on this criterion and the modified reduction, the authors of \cite{CHKL2015} developed a reduction-based telescoping algorithm, named~{\sf ReductionCT}, which either finds a minimal telescoper for $T$, or proves that no telescoper exists. The key advantage of this algorithm is that it separates the computation of telescopers from that of certificates. This is desirable in the typical situation where we are only interested in the telescopers and their size is much smaller than that of certificates.

When the existence of telescopers for $T$ is guaranteed,
we summarize below the idea of the algorithm {\sf ReductionCT}.

We begin by fixing the order of a telescoper for $T$, say~$\rho$, and then look for a telescoper of that order. If none exists, we look for one of the next higher order. We make an ansatz
\[L = e_0 + e_1 S_x + \dots + e_\rho S_x^\rho\]
with undetermined coefficients $e_0, \dots, e_\rho \in \cC(x)$. For $i$ from~$0$ to $\rho$, iteratively applying the modified reduction to~$\sigma_x^i(T)$ and manipulating the resulting residual forms according to Theorem~5.6 in~\cite{CHKL2015} lead to
\begin{equation}\label{EQ:ithdecom}
 \sigma_x^i(T) = \Delta_y(g_i H) +  \left(\frac{a_i}{b_i} + \frac{q_i}{v}\right)H,
\end{equation}
where $g_i\in \cC(x,y)$, $a_i, b_i\in \cC(x)[y]$ with $\deg_y(a_i)<\deg_y(b_i)$, $\gcd(a_i,b_i)=1$, $b_i$ is shift-free w.r.t.\ $y$ and strongly prime with $K$, and $q_i$ belongs to $\bW_K$. Moreover, the least common multiple~$B_\rho$ of~$b_0, \dots, b_\rho$ is shift-free w.r.t.\ $y$.
Let
\[A_\rho = \sum_{i=0}^\rho e_i a_i\frac{B_\rho}{b_i}\quad \text{and}\quad Q_\rho = \sum_{i=0}^\rho e_iq_i.\]
Then~$\deg_y(A_\rho) < \deg_y(B_\rho)$, $B_\rho$ is shift-free w.r.t.~$y$ and strongly prime with $K$. Moreover,~$\bW_K$ is a linear space over~$\cC(x)$, so $Q$ is in~$ \bW_K$. A direct calculation shows that
\[L(T) = \Delta_y\left(\sum_{i=0}^\rho e_i g_i H\right) +  \left(\frac{A_\rho}{B_\rho} + \frac{Q_\rho}{v}\right)H.\]
According to Theorem~\ref{THM:map}, $L(T)$ is summable w.r.t.\ $y$ if and only if $A_\rho/B_\rho+ Q_\rho/v= 0$.
Equivalently, $L$ is a telescoper for $T$ if and only if the linear system
\begin{equation}\label{EQ:system}
\begin{cases}
A_\rho =e_0 a_0\frac{B_\rho}{b_0} \ +\ e_1 a_1\frac{B_\rho}{b_1}\ +\ \cdots\ +\ e_\rho a_\rho\frac{B_\rho}{b_\rho}=0\\[1ex]
Q_\rho =e_0q_0\ +\ e_1q_1\ +\ \cdots\ +\ e_\rho q_\rho=0
\end{cases}
\end{equation}
has a nontrivial solution in $\cC(x)^{\rho+1}$. A linear dependence among these residual forms $\{a_i/b_i+q_i/v\}_{i= 0}^\rho$, for minimal~$\rho$, gives rise to a minimal telescoper for $T$.

The termination of the algorithm {\sf ReductionCT} is guaranteed by Abramov's criterion, see Theorem~6.3 in~\cite{CHKL2015} for more details. However, instead of using Abramov's criterion, one could prove the algorithm {\sf ReductionCT} terminates by showing that the residual forms $\{a_i/b_i+q_i/v\}_{i\geq 0}$ from~\eqref{EQ:ithdecom} form a finite-dimensional vector space over $\cC(x)$. This is exactly what we are going to do in the next section.

\section{Finite-dimensional remainders}\label{SEC:remainder}
In this section, we will show that some sequence of~$\{b_i\}_{i\geq 0}$ satisfying~\eqref{EQ:ithdecom} has a common multiple $B$, provided that~$T$  has  a telescoper. Moreover, $B$ is shift-free and strongly prime with $K$. The existence of this common multiple implies that the corresponding $\{a_i/b_i + q_i/v\}_{i\geq 0}$ from~\eqref{EQ:ithdecom} span a finite-dimensional vector space over $\cC(x)$, and lead to upper and lower bounds on the order of minimal telescopers. To this end, we need some preparations.
\subsection{Shift-homogeneous decomposition}\label{SUBSEC:shifthomdecomp}
Recall \cite{AbPe2001a} that irreducible polynomials $p, q$ in~$\cC[x,y]$ are said to be {\em shift-equivalent} w.r.t.\ $x,y$, denoted by $p \sim_{x,y} q$, if there exist two integers $m, n$ such that $q=\sigma_x^m\sigma_y^n(p)$.
Clearly $\sim_{x,y}$ is an equivalence relation. Choosing the pure lexicographic order~$x\prec y$, we say a polynomial is {\em monic} if its highest term has coefficient~$1$. A rational function is said to be {\em shift-homogeneous} if all non-constant monic irreducible factors of its denominator and numerator belong to the same shift-equivalence class.

By grouping together the factors in the same shift-equivalence class, every rational function $r\in \cC(x,y)$ can be decomposed into the form
\begin{equation}\label{EQ:shifthomdecomp}
r = c\, r_1 \dots r_s
\end{equation}
where $c \in \cC$, $s \in \bN$, each $r_i$ is a shift-homogeneous rational function, and any two non-constant monic irreducible factors of $r_i$ and $r_j$ are pairwise shift-inequivalent whenever $i \neq j$. We call \eqref{EQ:shifthomdecomp} a {\em shift-homogeneous decomposition} of $r$. The shift-homogeneous decomposition is unique up to the order of the factors and multiplication by nonzero constants.

Let $p\in \cC[x,y]$ be an irreducible integer-linear polynomial. Then $p=P(\lambda x + \mu y)$ for~$P \in \cC[z]$ and $\lambda, \mu\in \bZ$. W.L.O.G., we further assume that~$\mu \geq 0$ and~$\gcd(\lambda, \mu ) =1$. By B{\' e}zout's relation, there exist unique integers $\alpha, \beta$ with $|\alpha| <|\mu|$ and~$|\beta| <|\lambda|$ such that $\alpha \lambda + \beta \mu = 1$. Define~$\delta^{(\lambda, \mu)}$ to be $\sigma_x^\alpha \sigma_y^\beta$. For brevity, we just write $\delta$ if $(\lambda, \mu)$ is clear from the context. Note that $\delta(P(z))=P(z+1)$ with $z=\lambda x + \mu y$, which allows us to treat integer-linear polynomials as univariate ones.  For a Laurent polynomial~$\xi=\sum_{i=\ell}^\rho m_i \delta^i$ in~$\bZ[\delta,\delta^{-1}]$ with~$\ell, \rho, m_i\in \bZ$ and $\ell \leq\rho$, define
\[p^\xi = \delta^{\ell}(p^{m_{\ell}}) \delta^{\ell+1}(p^{m_{\ell+1}})  \cdots \delta^\rho(p^{m_\rho}).\]
It is readily seen that for any two irreducible integer-linear polynomials~$p, q\in \cC[x,y]$ of the forms $p=P(\lambda_1 x+\mu_1 y)$ and~$q=Q(\lambda_2 x+ \mu_2 y)$ with~$P, Q\in \cC[z]$, $\lambda_1, \mu_1, \lambda_2, \mu_2 {\in} \bZ$, $\mu_1, \mu_2{\geq} 0$ and $\gcd(\lambda_1, \mu_1) {=} \gcd(\lambda_2, \mu_2) {=}1$, we have~$p \sim_{x,y} q$ if and only if $\lambda_1 = \lambda_2$, $\mu_1=\mu_2$ and $q = p^{\delta^k}$ for some integer~$k$, in which $\delta = \delta^{(\lambda_1, \mu_1)} =\delta^{(\lambda_2, \mu_2)}$.

Adapt from~\eqref{EQ:shifthomdecomp}, every integer-linear rational function~$r$ in~$\cC(x,y)$ admits the following decomposition
\begin{equation}\label{EQ:integerlineardecomp}
r = c_r\, h_1^{\xi_1} \cdots h_s^{\xi_s}
\end{equation}
where $c_r\in \cC$, $s\in \bN$, each~$h_i\in \cC[x,y]$ is irreducible, monic and integer-linear over~$\cC$, and then $h_i =P_i(\lambda_i x + \mu_i y)$ for~$P_i \in \cC[z]$, $\lambda_i, \mu_i\in \bZ$ with $\mu_i\geq 0$, $\gcd(\lambda_i, \mu_i)=1$, and~$\xi_i\in \bZ[\delta^{(\lambda_i, \mu_i)},(\delta^{(\lambda_i, \mu_i)})^{-1}]$. Moreover, $h_i \nsim_{x,y} h_j $ whenever~$i \neq j$. W.L.O.G., we further assume that~$\xi_i$ belongs to $\bZ[\delta^{(\lambda_i, \mu_i)}]$.
\subsection{Relationship among remainders}\label{SUBSEC:relationship}
With Proposition~\ref{PRO:sf}, we can describe an inherent relationship among any residual forms $\{a_i/b_i+q_i/v\}_{i\geq 0}$ satisfying~\eqref{EQ:ithdecom}.
\begin{lemma}\label{LEM:xrfs}
    With Convention~\ref{CON:convention}, let~$r$ be a residual form of~$S$ w.r.t.\ $K$. Then~$\sigma_x(K)$ and $\sigma_x(S)$ are a kernel and the corresponding shell of~$\sigma_x(T)$ w.r.t.\ $y$. Moreover, $\sigma_x(r)$ is a residual form of~$\sigma_x(S)$ w.r.t.\ $\sigma_x(K)$.
\end{lemma}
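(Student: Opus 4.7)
The plan is to apply $\sigma_x$ to the identities witnessing Convention~\ref{CON:convention} and Definition~\ref{DEF:residual}, and check that each required property is preserved. The crucial observation throughout is that $\sigma_x$ and $\sigma_y$ commute, and that $\sigma_x$ restricts to a $\cC$-algebra automorphism of $\cC(x)[y]$ that fixes $y$ and therefore preserves $y$-degrees.

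First I would dispose of the kernel/shell claim. Since $T = SH$, we get $\sigma_x(T) = \sigma_x(S)\sigma_x(H)$, so $\sigma_x(S)\in \cC(x,y)$ is a candidate shell and $\sigma_x(H)$ a candidate kernel-carrier. The $y$-shift quotient of $\sigma_x(H)$ is
\[
\frac{\sigma_y(\sigma_x(H))}{\sigma_x(H)} \;=\; \sigma_x\!\left(\frac{\sigma_y(H)}{H}\right) \;=\; \sigma_x(K),
\]
using $\sigma_x\sigma_y=\sigma_y\sigma_x$. It remains to check that $\sigma_x(K)=\sigma_x(u)/\sigma_x(v)$ is shift-reduced w.r.t.\ $y$, i.e.\ that $\gcd(\sigma_x(u),\sigma_y^i(\sigma_x(v)))=1$ for every $i\in\bZ$. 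This follows at once by applying the automorphism $\sigma_x$ of $\cC(x)[y]$ to the corresponding identity $\gcd(u,\sigma_y^i(v))=1$ coming from the shift-reducedness of $K$, once we pull $\sigma_y^i$ past $\sigma_x$.

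For the residual-form claim, I would start from the defining equation
\[
S \;=\; K\,\sigma_y(g) - g + r, \qquad r = \frac{a}{b} + \frac{q}{v},
\]
and apply $\sigma_x$ to obtain
\[
\sigma_x(S) \;=\; \sigma_x(K)\,\sigma_y(\sigma_x(g)) - \sigma_x(g) + \sigma_x(r),
\]
with $\sigma_x(r) = \sigma_x(a)/\sigma_x(b) + \sigma_x(q)/\sigma_x(v)$. I would then verify the four conditions of Definition~\ref{DEF:residual} in turn: (i) $\deg_y(\sigma_x(a))<\deg_y(\sigma_x(b))$, because $\sigma_x$ fixes $y$ and is a bijection on $\cC(x)$ so preserves $y$-degrees; (ii) $\gcd(\sigma_x(a),\sigma_x(b))=1$, because $\sigma_x$ is a ring automorphism of $\cC(x)[y]$; (iii) $\sigma_x(b)$ is shift-free and strongly prime with $\sigma_x(K)$, again by applying $\sigma_x$ to the corresponding gcd identities for $b$ and commuting $\sigma_x$ with $\sigma_y^{\pm i}$; (iv) $\sigma_x(q)\in \bW_{\sigma_x(K)}$.

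The one point deserving slightly more care is (iv). Here I would first establish the intertwining relation $\sigma_x\circ \phi_K = \phi_{\sigma_x(K)}\circ\sigma_x$ on $\cC(x)[y]$, which is a one-line computation from $\phi_K(p)=u\sigma_y(p)-vp$ together with $\sigma_x\sigma_y=\sigma_y\sigma_x$. Consequently $\sigma_x(\im(\phi_K))=\im(\phi_{\sigma_x(K)})$, and since $\sigma_x$ preserves $y$-degrees the two sets of attained degrees coincide, so $\bW_K=\bW_{\sigma_x(K)}$ as $\cC$-subspaces spanned by the same monomials $y^\ell$. Because $\sigma_x$ fixes each $y^\ell$, we conclude $\sigma_x(q)\in \sigma_x(\bW_K)=\bW_K=\bW_{\sigma_x(K)}$, which completes the verification. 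No step is truly an obstacle; the only routine pitfall is keeping the commutation $\sigma_x\sigma_y=\sigma_y\sigma_x$ visible at every stage so that the shift-reducedness and strong primality conditions transport cleanly.
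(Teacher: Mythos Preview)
Your proof is correct and follows essentially the same approach as the paper: apply $\sigma_x$ throughout, use the commutativity $\sigma_x\sigma_y=\sigma_y\sigma_x$, and transport each gcd/degree condition via the $\cC(x)[y]$-automorphism $\sigma_x$. The only minor variation is in step~(iv): the paper invokes the explicit echelon basis of $\bW_K$ from \cite[\S4.2]{CHKL2015} together with $\sigma_x\circ\lc_y=\lc_y\circ\sigma_x$, whereas you argue more intrinsically via the intertwining $\sigma_x\circ\phi_K=\phi_{\sigma_x(K)}\circ\sigma_x$ to conclude that $\im(\phi_K)$ and $\im(\phi_{\sigma_x(K)})$ have the same set of attained $y$-degrees, hence $\bW_K=\bW_{\sigma_x(K)}$; your argument is self-contained and avoids the external case analysis, but both reach the same conclusion by the same underlying mechanism.
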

\begin{proof}
      According to Convention~\ref{CON:convention}, $\sigma_x(T) = \sigma_x(S)\sigma_x(H)$ and $\sigma_x(K)$ is the $y$-shift quotient of~$\sigma_x(H)$. To prove the first assertion, one needs to show that~$\sigma_x(K)$ is shift-reduced w.r.t.~$y$, which can be proven by observing that, for any two polynomials~$p_1, p_2 \in \cC(x)[y]$, $\gcd(\sigma_x(p_1),\sigma_x(p_2)) = 1$ if and only if $\gcd(p_1,p_2) = 1$. Let~$r = a/b+q/v$, where~$a,b,q$ belong to $\cC(x)[y]$, $\deg_y(a)<\deg_y(b)$, $\gcd(a,b)=1$, $b$ is shift-free and strongly prime with~$K$, and~$q\in \bW_K$. Clearly, we have $\deg_y(\sigma_x(a))<\deg_y(\sigma_x(b))$ and $\gcd(\sigma_x(a),\sigma_x(b))=1$. The shift-freeness and strong primeness w.r.t.\ $\sigma_x(K)$ of~$\sigma_x(b)$ easily follows by the above observation.

      Note that~$\sigma_x\circ \deg_y = \deg_y \circ \,\sigma_x$ and $\sigma_x\circ \lc_y = \lc_y\circ \,\sigma_x$, where~$\lc_y(p)$ is the leading coefficient of a polynomial~$p$ in~$\cC(x)[y]$. So the standard complements~$\bW_K$ and $\bW_{\sigma_x(K)}$ for polynomial reduction have the same echelon basis according to the case study in~\cite[\S 4.2]{CHKL2015}. It follows from~$q\in \bW_K$ that $\sigma_x(q) \in \bW_{\sigma_x(K)}$. Accordingly, $\sigma_x(r)$ is a residual form of~$\sigma_x(S)$ w.r.t.\ $\sigma_x(K)$.
\end{proof}
\begin{prop}\label{PRO:brelationship}
    With Convention \ref{CON:convention}, for every nonnegative integer $i$, assume that $\sigma_x^i(T)$ can be decomposed into~\eqref{EQ:ithdecom},
    where $g_i\in \cC(x,y)$, $a_i, b_i \in \cC(x)[y]$ with~$\deg_y(a_i)<\deg_y(b_i)$, $\gcd(a_i,b_i)=1$, $b_i$ shift-free w.r.t.\ $y$ and strongly prime with~$K$, and~$q_i$ belongs to $\bW_K$. Then~$b_i\approx_y \sigma_x^i(b_0)$.
\end{prop}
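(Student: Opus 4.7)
The plan is to recognize the claim as a direct consequence of Proposition~\ref{PRO:sf}, by exhibiting two residual forms of the same rational function arising from two different RNFs of it.

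First, I would divide equation~\eqref{EQ:ithdecom} by $H$. Setting $c_i := \sigma_x^i(H)/H$, which lies in $\cC(x,y)$ because $H$ has a rational $x$-shift quotient, the identity becomes
\[
\sigma_x^i(S)\,c_i \;=\; K\sigma_y(g_i) - g_i + \frac{a_i}{b_i} + \frac{q_i}{v}.
\]
Since $K$ is shift-reduced, $b_i$ is shift-free and strongly prime with $K$, and $q_i \in \bW_K$, this exhibits $a_i/b_i + q_i/v$ as a residual form of $\sigma_x^i(S)\,c_i$ w.r.t.\ $K$, with significant denominator $b_i$. On the other hand, the $i=0$ instance of~\eqref{EQ:ithdecom} divided by $H$ reads $S = K\sigma_y(g_0) - g_0 + r_0$, where $r_0 = a_0/b_0 + q_0/v$. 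Applying $\sigma_x^i$ and iterating Lemma~\ref{LEM:xrfs}, $\sigma_x^i(r_0)$ is a residual form of $\sigma_x^i(S)$ w.r.t.\ $\sigma_x^i(K)$, with significant denominator $\sigma_x^i(b_0)$.

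Next, I would verify that both pairs $(K, \sigma_x^i(S)\,c_i)$ and $(\sigma_x^i(K), \sigma_x^i(S))$ are RNFs of the common rational function $f := \sigma_y(\sigma_x^i(T))/\sigma_x^i(T)$. Both kernels are shift-reduced (by Convention~\ref{CON:convention} and Lemma~\ref{LEM:xrfs}), so only the product identities require checking. The direct computation reduces to
\[
\frac{\sigma_y(c_i)}{c_i} \;=\; \frac{\sigma_x^i(K)}{K},
\]
which follows from $\sigma_y(H)/H = K$ together with the commutativity of $\sigma_x$ and $\sigma_y$. With both RNFs in hand, Proposition~\ref{PRO:sf} applies and forces the two significant denominators $b_i$ and $\sigma_x^i(b_0)$ to be shift-related w.r.t.\ $y$, which is exactly the claim.

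The main obstacle, if any, is conceptual rather than technical: one must resist the temptation to read~\eqref{EQ:ithdecom} as a reduction of $\sigma_x^i(T)$ against its canonical kernel $\sigma_x^i(K)$. The decomposition actually uses the alternative kernel-shell pair $(K, \sigma_x^i(S)\,c_i)$, and only after packaging it this way does the residual form of Definition~\ref{DEF:residual} literally apply. Once this identification is made, the proof is a one-step application of Proposition~\ref{PRO:sf}, with Lemma~\ref{LEM:xrfs} handling the $\sigma_x$-equivariance on the other side.
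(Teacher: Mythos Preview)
Your proof is correct and follows essentially the same route as the paper: identify $(K,\sigma_x^i(S)\,c_i)$ and $(\sigma_x^i(K),\sigma_x^i(S))$ as two RNFs of $\sigma_y(\sigma_x^i(T))/\sigma_x^i(T)$, use Lemma~\ref{LEM:xrfs} for the $\sigma_x$-equivariance of residual forms, and conclude via Proposition~\ref{PRO:sf}. The only cosmetic difference is that the paper proves the case $i=1$ (with $N=\sigma_x(H)/H$, your $c_1$) and then appeals to induction, whereas you carry out the argument directly for arbitrary $i$.
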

\begin{proof}
   It suffices to show $b_1\approx_y \sigma_x(b_0)$. The rest follows by a direct induction on $i$.

   Applying $\sigma_x$ to both sides of \eqref{EQ:ithdecom} with $i = 0$ gives
   \begin{align*}
     \sigma_x(T) &= \sigma_x(\Delta_y(g_0 H))+\sigma_x\left(\frac{a_0}{b_0} + \frac{q_0}{v}\right)\sigma_x(H)\\
     &=\Delta_y(\sigma_x(g_0 H))+\left(\frac{\sigma_x(a_0)}{\sigma_x(b_0)} + \frac{\sigma_x(q_0)}{\sigma_x(v)} \right)\sigma_x(H)
   \end{align*}
   It follows from Lemma~\ref{LEM:xrfs} that $(\sigma_x(K), \sigma_x(S))$ is an RNF of the $y$-shift quotient of~$\sigma_x(T)$, and $\sigma_x(a_0)/\sigma_x(b_0) + \sigma_x(q_0)/\sigma_x(v)$ is a residual form of~$\sigma_x(S)$ w.r.t.\ $\sigma_x(K)$. Let~$N = \sigma_x(H)/H$. Then $(K,\sigma_x(S)N)$ is also an RNF of the $y$-shift quotient of~$\sigma_x(T)$. By~\eqref{EQ:ithdecom} with~$i = 1$, $a_1/b_1 + q_1/v$ is a residual form of~$\sigma_x(S)N $ w.r.t.~$K$. By Proposition~\ref{PRO:sf}, we have $\sigma_x(b_0)\approx_y b_1$.
\end{proof}
The following lemma says that, with Convention~\ref{CON:convention}, for any polynomial~$f$ in $\cC(x)[y]$, there always exists~$g \in \cC(x)[y]$ s.t.\ $f\approx_y g$ and $g$ is strongly prime with~$K$.
\begin{lemma}\label{LEM:stronglyprime}
    With Convention~\ref{CON:convention}, assume that~$p$ is an irreducible polynomial in~$\cC(x)[y]$. Then there exists an integer~$m$ s.t.\ $\sigma_y^m(p)$ is strongly prime with~$K$.
\end{lemma}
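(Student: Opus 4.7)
The proof plan is as follows.

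First, I would observe that since $p$ is irreducible in $\cC(x)[y]$, it cannot lie in $\cC(x)$ (a field has no irreducible elements), so $p$ has positive degree in $y$. Combined with $\mathrm{char}(\cC) = 0$, this forces the shifts $\sigma_y^k(p)$ for distinct $k \in \bZ$ to be pairwise non-associate irreducible elements of $\cC(x)[y]$: indeed $\lc_y$ is $\sigma_y$-invariant, so any associate relation $\sigma_y^k(p) = c \cdot \sigma_y^{k'}(p)$ would force $c = 1$ and then $p(y) = p(y + (k-k'))$, which is impossible for a nonconstant polynomial in characteristic zero unless $k = k'$.

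Next I would introduce the two sets
\[
   I_u = \{k \in \bZ : \sigma_y^k(p) \mid u\}, \qquad I_v = \{k \in \bZ : \sigma_y^k(p) \mid v\}.
\]
Because $u$ and $v$ each have only finitely many monic irreducible factors in $\cC(x)[y]$, and different shifts of $p$ are pairwise non-associate by the previous paragraph, both $I_u$ and $I_v$ are finite. The crucial step is to show that $I_u$ and $I_v$ cannot both be nonempty. Suppose, for contradiction, that $k \in I_u$ and $k' \in I_v$. Then $\sigma_y^k(p) \mid u$ and $\sigma_y^k(p) = \sigma_y^{k-k'}\bigl(\sigma_y^{k'}(p)\bigr) \mid \sigma_y^{k-k'}(v)$, so $\sigma_y^k(p)$ is a common factor of $u$ and $\sigma_y^{k-k'}(v)$. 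This contradicts the fact that $K = u/v$ is shift-reduced, which by definition means $\gcd(u, \sigma_y^{\ell}(v)) = 1$ for every $\ell \in \bZ$.

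Finally I would choose $m$ case by case: if $I_u = I_v = \emptyset$, take $m = 0$; if only $I_u$ is empty, take $m = \min I_v - 1$; if only $I_v$ is empty, take $m = \max I_u + 1$. In every case I would verify the strong primality directly from Definition~\ref{DEF:prime}. For example, in the last case, for all $i \ge 0$ we have $m + i \ge m > \max I_u$, so $\sigma_y^{m+i}(p) \nmid u$, whence $\gcd(\sigma_y^m(p), \sigma_y^{-i}(u)) = 1$; and for all $i \ge 0$, $m - i \le m$ need not lie outside $I_v$ a priori, but since $I_v = \emptyset$, $\sigma_y^{m-i}(p) \nmid v$, whence $\gcd(\sigma_y^m(p), \sigma_y^{i}(v)) = 1$. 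The remaining cases are symmetric.

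I do not anticipate a serious obstacle here: everything reduces to the finiteness of $I_u, I_v$ together with the shift-reducedness of $K$. The only subtle point is ensuring the two sets cannot both be nonempty, and this is precisely what shift-reducedness buys us.
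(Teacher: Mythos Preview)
Your proposal is correct and follows essentially the same approach as the paper: both arguments use shift-reducedness of $K$ to conclude that at most one of $u,v$ can admit a $\sigma_y$-shift of $p$ as a factor, and then pick $m$ just beyond the extreme index among the finitely many bad shifts. Your write-up is slightly more explicit (you spell out why distinct shifts of $p$ are non-associate and why $I_u,I_v$ are finite and disjoint), whereas the paper organizes the same idea as three cases keyed to the failure mode of strong primeness, but the content is the same.
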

\begin{proof}
    It suffices to consider the following three cases according to the definition of strong primeness.

    \smallskip\noindent
    {\em Case~1.} $p$ is strongly prime with~$K$. Then the lemma follows by letting~$m = 0$.

    \smallskip\noindent
    {\em Case~2.} There exists an integer~$k\geq 0$ s.t.\ $\sigma_y^k(p)\mid u$. Then for every integer~$\ell$, we have $\gcd(\sigma_y^\ell(p),v)=1$, since~$K$ is shift-reduced w.r.t.\ $y$. Let
    \[m = \max\{i\in \bN\mid \sigma_y^i(p)\mid u\} +1.\]
    One can see that~$\sigma_y^m(p)$ is strongly prime with~$K$.

    \smallskip\noindent
    {\em Case~3.} There exists an integer~$k\leq 0$ s.t.\ $\sigma_y^k(p)\mid v$. Then for every integer~$\ell$, we have $\gcd(\sigma_y^\ell(p),u)=1$, since~$K$ is shift-reduced w.r.t.\ $y$. Let
    \[m = \min\{i\in \bN\mid \sigma_y^i(p)\mid v\}-1.\]
    One can see that~$\sigma_y^m(p)$ is strongly prime with~$K$.

    The lemma follows.
\end{proof}
The following proposition computes a common multiple of some sequence~$\{b_i\}_{i\geq 0}$ satisfying~\eqref{EQ:ithdecom}, provided that~$b_0$ is integer-linear.
\begin{prop}\label{PRO:commonden}
With Convention~\ref{CON:convention}, assume that
\begin{equation}\label{EQ:initialred}
    T = \Delta_y(gH) + \left(\frac{a}{b} + \frac{q}{v} \right) H,
\end{equation}
where~$g\in \cC(x,y)$, $a,b\in \cC(x)[y]$, $\deg_y(a) < \deg_y(b)$, $\gcd(a,b)=1$, $b$ is shift-free w.r.t.\ $y$ and strongly prime with~$K$, and~$q\in \bW_K$. Further assume $b$ is integer-linear. Then there exists a polynomial~$B\in \cC(x)[y]$ with~$B$ shift-free w.r.t.\ $y$ and strongly prime with~$K$ s.t.\ $b\mid B$, and for every nonnegative integer $i\geq 0$,~$\sigma_x^i(T)$ can be decomposed into
\begin{equation}\label{EQ:ithred}
    \sigma_x^i(T) = \Delta_y(g_iH) + \left(\frac{a_i}{B} + \frac{q_i}{v}\right)H,
\end{equation}
where~$g_i\in \cC(x,y)$,~$a_i \in \cC(x)[y]$ with $\deg_y(a_i) < \deg_y(B)$ and~$q_i \in \bW_K$.
\end{prop}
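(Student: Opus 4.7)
The plan is to construct an explicit common denominator $B$ from the integer-linear factorization of $b$, and then to show by iterated $y$-shift manipulations that the reduction of each $\sigma_x^i(T)$ can be brought into the desired form with denominator~$B$.

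\textbf{Construction of $B$.} Since $b$ is integer-linear and shift-free w.r.t.\ $y$, I would first decompose $b = c\prod_j p_j^{\alpha_j}$ into pairwise $y$-shift-inequivalent monic irreducible integer-linear factors $p_j = P_j(\lambda_j x + \mu_j y)$ with $\mu_j > 0$ and $\gcd(\lambda_j,\mu_j)=1$. A direct calculation yields $\sigma_x^i(p_j) = \sigma_y^m(p_j)$ exactly when $\mu_j m = \lambda_j i$, i.e., when $\mu_j \mid i$. Hence the orbit $\{\sigma_x^i(p_j) : i \in \bZ\}$ splits into exactly $\mu_j$ many $y$-shift-equivalence classes, indexed by $i \bmod \mu_j$. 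Let $\mathcal{O}$ be the finite set of $y$-shift-equivalence classes $C$ such that $C$ contains $\sigma_x^i(p_j)$ for some $j$ and some $i \ge 0$. For each $C \in \mathcal{O}$, select a representative $r_C$ strongly prime with $K$, which exists by Lemma~\ref{LEM:stronglyprime}; when $C$ already contains some factor $p_j$ of $b$ (at $i = 0$), take $r_C = p_j$, which is strongly prime with $K$ because $b$ is. Assigning $\beta_C = \max\{\alpha_j : \text{the class of } \sigma_x^i(p_j) \text{ equals } C \text{ for some } i \ge 0\}$, set
\[ B = \prod_{C \in \mathcal{O}} r_C^{\beta_C}. \]
Then distinct representatives lie in distinct $y$-shift classes, so $B$ is shift-free w.r.t.\ $y$; $B$ is strongly prime with $K$ as a product of strongly prime factors; and $b \mid B$ because each $p_j$ appears as its own representative with multiplicity $\alpha_j \le \beta_C$.

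\textbf{Producing \eqref{EQ:ithred}.} For each $i \ge 0$, applying Theorem~\ref{THM:map} to $\sigma_x^i(T)$ yields an initial decomposition with some significant denominator $\tilde b_i$. By Proposition~\ref{PRO:brelationship}, $\tilde b_i \approx_y \sigma_x^i(b)$, so every irreducible factor of $\tilde b_i$ is $y$-shift-equivalent to some $\sigma_x^i(p_j)$, hence to a unique $r_C$ dividing $B$ with multiplicity $\beta_C$ at least that of the factor in $\tilde b_i$. Using the manipulation already employed in the proof of Lemma~\ref{LEM:vrelatedness} (a factor-by-factor application of Lemma~4.2 and Remark~4.3 of~\cite{CHKL2015}), I would iteratively $y$-shift each irreducible factor of $\tilde b_i$ to match the corresponding $r_C$ modulo summable terms of the form $(K\sigma_y(g) - g)H$. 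The resulting rational function has denominator dividing $B$; clearing to $B$ produces $a_i \in \cC(x)[y]$ with $\deg_y(a_i) < \deg_y(B)$, and a final polynomial reduction absorbs any extra polynomial piece into an element $q_i \in \bW_K$, giving \eqref{EQ:ithred}.

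\textbf{Main obstacle.} The main technical challenge lies in the shifting step: verifying that the factor-by-factor $y$-shifting genuinely lands inside the framework of residual forms. Each shift operation introduces auxiliary factors of $u$ and $v$, which must be harmless because $B$ is strongly prime with $K$, and must not disturb the shift-freeness carefully built into $B$. Handling this cleanly amounts to an induction on the total $y$-shift distance between $\tilde b_i$ and its matching divisor of $B$, the inductive step consisting of a single application of the shift identities from~\cite{CHKL2015} together with a tracking of how $q_i$ evolves within $\bW_K$.
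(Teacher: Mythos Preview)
Your construction of $B$ and the overall strategy match the paper's proof: both exploit that the $\sigma_x$-orbit of an integer-linear irreducible factor $P(\lambda x+\mu y)$ meets exactly $\mu$ many $y$-shift classes, pick one strongly-prime representative per class (via Lemma~\ref{LEM:stronglyprime}) with the maximal occurring multiplicity, and then use Proposition~\ref{PRO:brelationship} to see that the significant denominator of each $\sigma_x^i(T)$ is $y$-shift-related to a divisor of~$B$. The paper organizes this slightly differently---it reduces to a single shift-homogeneous component by partial fractions and records the intermediate fact $\sigma_x(B)\approx_y B$ explicitly---but the content is the same.

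The one place you are working harder than necessary is your ``Main obstacle.'' The paper does not redo the factor-by-factor $y$-shifting by hand; it simply invokes Theorem~5.6 of~\cite{CHKL2015}, which already guarantees that if $\tilde b_i\approx_y b_i$ with $\tilde b_i$ strongly prime with~$K$, then the residual form can be rewritten with significant denominator~$\tilde b_i$ and a new $q_i\in\bW_K$. Citing that theorem removes the need for the induction on shift distance you outline, and with it the bookkeeping about auxiliary $u,v$ factors.
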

\begin{proof}
   When~$b\in \cC(x)$. By the modified Abramov-Petkov{\v s}ek reduction, \eqref{EQ:ithdecom} holds for every~$i>0$. Then~$b_i\in \cC(x)$ by Proposition~\ref{PRO:brelationship}. The proposition follows by letting~$B=1$. Assume that~$b\notin \cC(x)$. Since~$b$ is integer-linear, by~\eqref{EQ:integerlineardecomp},
\begin{equation}\label{EQ:b}
 b=c_b \cdot h_{1}^{\xi_{1}}\cdots h_{t}^{\xi_{t}}
\end{equation}
where $c_b \in \cC(x)$, $t\in \bN$, each $h_j$ is a monic irreducible integer-linear polynomial of the form $h_j = P_j(\lambda_j x + \mu_j y)$ for~$P_j \in \cC[z]$, $\lambda_j, \mu_j\in \bZ$, $\mu_j> 0$ and $\gcd(\lambda_j, \mu_j) = 1$, and~$\xi_{j}$ belongs to $\bN[\delta^{(\lambda_j, \mu_j)}]$ for $1\leq j \leq t$. Moreover, $h_j \nsim_{x,y} h_k $ whenever $j \neq k$. Due to the primeness of~$h_j$ and the partial fraction decomposition of~$a/b$, it suffices to prove the local case, that is, $c_b = t=1$ and
\[b = P(\lambda x + \mu y)^{\sum_{i=0}^s c_i \delta^i},\]
where~$P\in \cC[z]$, $s, c_i\in \bN$, $\lambda, \mu \in \bZ$, $\mu >0$ and $\delta = \delta^{(\lambda, \mu)}$. Note that for every~$i\in \bN$, there are unique integers~$j, k_j$ with $0\leq j \leq \mu-1$ s.t.\ $i = \mu k_j + j$. Let~$c_j'= c_{\mu k_j + j}$. Since~$b$ is shift-free w.r.t.~$y$, we have
\[b = \prod_{j=0}^{\mu-1}P(\lambda x + \mu y+j)^{c_j'\sigma_y^{k_j}}.\]
For every~$0\leq j \leq \mu-1$, set~$\ell_j$ to be $k_j$ if~$m_j \neq 0$, or some integer otherwise with~$P(\lambda x+ \mu y+ j)^{\sigma_y^{\ell_j}}$ strongly prime with~$K$ by Lemma~\ref{LEM:stronglyprime}. Let $m = \max_{0\leq j \leq \mu-1}\{c_j'\}$ and
\begin{equation}\label{EQ:commonden}
    B = \prod_{j=0}^{\mu-1}P(\lambda x + \mu y+j)^{m\sigma_y^{\ell_j}}.
\end{equation}
Since~$\ell_j = k_j$ when~$m_j\neq 0$, every irreducible factor of~$b$ divides~$B$ and thus~$b\mid B$ by the maximum of~$m$. $B$ is shift-free w.r.t.\ $y$ since~$0\leq j\leq \mu-1$. Moreover, $B$ is strongly prime with~$K$ by the choice of~$\ell_j$.

It remains to show that~\eqref{EQ:ithred} holds for every nonnegative integer~$i$. To prove this, we first show $\sigma_x(B)\approx_y B$. By~\eqref{EQ:commonden},
\[B \approx_y \prod_{j=0}^{\mu-1}P(\lambda x + \mu y+j)^m,\]
which establishes that
\begin{equation*}
 \sigma_x(B) \approx_y\prod_{j=0}^{\mu-1}P(\lambda x + \mu y+j+\lambda)^m.
\end{equation*}
One sees that there is a unique integer~$0\leq k \leq \mu-1$ s.t.
\[P(\lambda x + \mu y+j+\lambda) \sim_y P(\lambda x + \mu y+k).\]
Conversely, for any~$0\leq k\leq \mu-1$, there exists a unique integer~$0\leq j\leq \mu-1$ s.t.\ the above equivalence holds. Thus
\[\sigma_x(B)\approx_y \prod_{k=0}^{\mu-1}P(\lambda x + \mu y+k)^m\approx_y B.\]

For~$i=0$, letting~$a_0 = aB/b$ and~$q_0 = q$ gives~\eqref{EQ:ithred}.
For every~$i > 0$, $\sigma_x^i(B)\approx_y \sigma_x^{i-1}(B)$ since~$\sigma_x(B)\approx_y B$ and then~$\sigma_x^i(B)\approx_y B$. By the modified Abramov-Petkov{\v s}ek reduction, \eqref{EQ:ithdecom} holds for every~$i\geq 0$, in which~$b_0 = b$. According to Proposition~\ref{PRO:brelationship}, $b_i\approx_y\sigma_x^i(b_0)$. It follows from~$b\mid B$ that $\sigma_x^i(b)\mid \sigma_x^i(B)$.  Consequently, we have
\[b_i\approx_y \sigma_x^i(b)\mid \sigma_x^i(B)\approx_y B.\]
Thus there is $\tb_{i}\in \cC(x)[y]$ dividing~$B$ so that~$\tb_{i} \approx_y b_{i}$. Moreover, $\tb_i$ is strongly prime with~$K$ as~$B$ is. It follows from Theorem~5.6 in \cite{CHKL2015} that there exist $\tilde{g_i}\in \cC(x,y)$, $\ta_{i}\in \cC(x)[y]$ with $\deg_y(\ta_{i}) < \deg_y(\tb_{i})$ and~$\tq_{i} \in \bW_K$ such that~$\sigma_x^i(T) = \Delta_y(\tilde{g_i}H)+(\ta_i/\tb_i+\tq_i/v)H$. The assertion follows by noticing
\[\sigma_x^i(T) = \Delta_y(\tilde{g_i}H) + \left(\frac{\ta_i B/\tb_i}{B} + \frac{\tq_i}{v}\right)H.\]
\end{proof}
Under the assumptions and notations of Proposition~\ref{PRO:commonden}, applying the modified Abramov-Petkov{\v s}ek reduction to~$T$ w.r.t.\ $y$ yields~\eqref{EQ:map}.
By Proposition~\ref{PRO:uniqueness} and Proposition~\ref{PRO:sf}, the significant denominator of~$r$ in~\eqref{EQ:map} is shift-related to~$b$ w.r.t.\ $y$. Thus the shift-equivalence classes represented by~$h_j\, (1\leq j \leq t)$ in~\eqref{EQ:b} are independent of the choice of~$b$. Therefore, the degree of $B$ w.r.t.\ $y$ is fixed once a hypergeometric term $T$ is given, although the form of~$B$ depends on the choice of~$b$.
\subsection{Upper and lower bounds}\label{SUBSEC:bounds}
Now we show that Proposition~\ref{PRO:commonden} implies some residual forms $\{a_i/b_i+q_i/v\}_{i\geq 0}$ satisfying~\eqref{EQ:ithdecom} form a finite-dimensional vector space over~$\cC(x)$, and then derive the order bounds for minimal telescopers.
\begin{theorem}\label{THM:upperbound}
	With the assumptions and notations introduced in Proposition~\ref{PRO:commonden}, the order of a minimal telescoper for $T$ is no more than
	\begin{align*}\label{EQ:upperbound}
		& \max\{\deg_y(u), \deg_y(v)\} - \llbracket \deg_y(v-u) \leq \deg_y(u) -1\rrbracket\nonumber\\
		&+\sum_{j=1}^t{\mu_j m_j \deg(P_j)},
	\end{align*}
   where $m_j$ is the maximum coefficient of $\xi_{j}$ for $1\leq j \leq t$, and $\llbracket\varphi\rrbracket$ equals~$1$ if $\varphi$ is true, otherwise it is~$0$.
\end{theorem}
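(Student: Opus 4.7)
The strategy is to bound the order of a minimal telescoper by the $\cC(x)$-dimension of the ambient vector space in which all the residual forms from~\eqref{EQ:ithred} live. By Proposition~\ref{PRO:commonden} there is a common significant denominator $B$ for every $\sigma_x^i(T)$; since $B$ is strongly prime with $K$, in particular $\gcd(B,v)=1$, so the $\cC(x)$-vector space
\[
V := \left\{ \frac{a}{B} + \frac{q}{v} : a \in \cC(x)[y],\ \deg_y(a) < \deg_y(B),\ q \in \bW_K \right\}
\]
has dimension $\dim_{\cC(x)} V = \deg_y(B) + \dim_{\cC(x)} \bW_K$. Since each $a_i/B + q_i/v$ from~\eqref{EQ:ithred} lies in $V$, any $\dim V + 1$ of them are $\cC(x)$-linearly dependent; by the argument around~\eqref{EQ:system} such a dependence yields a telescoper for $T$, so the minimal order is at most $\dim V$.

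The first summand, $\deg_y(B)$, is read off the explicit formula~\eqref{EQ:commonden}: each shift-equivalence class represented by $h_j = P_j(\lambda_j x + \mu_j y)$ contributes a product of $\mu_j$ shifted copies of $P_j^{m_j}$, each of $y$-degree $m_j \deg(P_j)$, yielding $\deg_y(B) = \sum_{j=1}^t \mu_j m_j \deg(P_j)$, matching the last term of the stated bound. The second summand follows from the case analysis of $\phi_K(p) = u\sigma_y(p) - v p$ carried out in \cite[\S 4.2]{CHKL2015}. Writing $\phi_K(p) = (u-v)p + u\Delta_y(p)$ and setting $D := \max\{\deg_y(u),\deg_y(v)\}$, the leading coefficients of $u\sigma_y(y^d)$ and $vy^d$ cancel (causing $\deg_y \phi_K(y^d)$ to drop from $D+d$ to $D+d-1$) precisely when $\deg_y(u) = \deg_y(v)$ and their leading coefficients coincide, i.e., exactly when $\deg_y(v-u) \leq D-1 = \deg_y(u)-1$. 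In the no-cancellation case the degrees achieved by $\im(\phi_K)$ form $\{D,D+1,\ldots\}$, so $\bW_K = \spanning\{1,y,\ldots,y^{D-1}\}$ has dimension $D$; in the cancellation case these become $\{D-1,D,\ldots\}$ and the dimension drops by one. In either case $\dim \bW_K = D - \llbracket \deg_y(v-u) \leq \deg_y(u)-1 \rrbracket$.

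Adding the two contributions gives exactly the claimed upper bound on $\dim V$, hence on the order of a minimal telescoper for $T$. The main technical point is the degree-cancellation analysis in $\phi_K$: verifying that the loss-of-one-dimension condition on $\bW_K$ coincides exactly with $\deg_y(v-u) \leq \deg_y(u)-1$, and checking edge cases (such as when $u$ or $v$ is constant, or when $\deg_y(u-v)$ is strictly less than $D-1$), all of which follow the same degree-tracking pattern but need to be confirmed so that no additional low-degree monomials enter $\bW_K$.
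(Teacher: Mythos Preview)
Your approach is essentially the same as the paper's: bound the minimal order by the $\cC(x)$-dimension of the space $V$ of residual forms with fixed denominator $B$, split $\dim V$ as $\deg_y(B)+\dim_{\cC(x)}\bW_K$, read off $\deg_y(B)=\sum_j\mu_j m_j\deg(P_j)$ from the construction of $B$, and invoke the bound on $\dim_{\cC(x)}\bW_K$. The paper simply cites \cite[Proposition~4.7]{CHKL2015} for this last bound rather than re-deriving it.

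One small inaccuracy in your side derivation: in the cancellation case the set of degrees realized by $\im(\phi_K)$ is \emph{not} always $\{D-1,D,D+1,\dots\}$. For instance, with $u=y^2+1$ and $v=y^2$ (which is shift-reduced) one finds $\deg_y\phi_K(1)=0$, $\deg_y\phi_K(y)=2$, $\deg_y\phi_K(y^2)=3$, \dots, so the achieved degrees are $\{0,2,3,\dots\}$ and the missing degree is $1$, not $D-1=1$\dots wait, here it coincides numerically, but in general the single missing degree can sit anywhere below $D$ depending on the subleading coefficients of $u$ and $v$. What is true, and what Proposition~4.7 of \cite{CHKL2015} establishes by the more careful case analysis, is that exactly one degree is missing, so $\dim_{\cC(x)}\bW_K\le D-1$; your conclusion is correct even though the specific description of the degree set is not. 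Since only the upper bound on $\dim_{\cC(x)}\bW_K$ is needed for the theorem, this does not affect the validity of your argument.
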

\begin{proof}
	Let $L=\sum_{i=0}^{\rho}{e_i S_x^i}$ with~$\rho \in \bN$, $e_0, \ldots ,e_\rho \in \cC(x)$, not all zero, be a minimal telescoper for $T$ w.r.t.\ $y$. By Proposition~\ref{PRO:commonden}, \eqref{EQ:ithred} holds for every $0\leq i \leq \rho$. Then by the arguments in Section~\ref{SEC:telescoping}, the linear system \eqref{EQ:system}, in which $b_i = B_\rho = B$ for $1\leq i \leq \rho$, of equations for the variables~$\{e_0, \ldots, e_\rho\}$ has a nontrivial solution in~$\cC(x)^{\rho+1}$. Since~$Q_\rho\in \bW_K$, the number of terms w.r.t.~$y$ in $Q_\rho$ is no more than $\dim_{\cC(x)}(\bW_K)$, which is bounded by
	\[\max\{\deg_y(u), \deg_y(v)\} - \llbracket \deg_y(v-u) \leq \deg_y(u) -1\rrbracket\]
according to Proposition~4.7 in \cite{CHKL2015}. Note that The solutions of the system \eqref{EQ:system} are in one-to-one correspondence with the telescopers for $T$. Comparing coefficients of like powers of $y$ of the linear system \eqref{EQ:system} yields at most $\deg_y(A_\rho)+\dim_{\cC(x)}{\bW_K}+1$ equations. Hence this system has nontrivial solutions whenever $\rho>\deg_y(A_\rho)+\dim_{\cC(x)}{\bW_K}$. It implies that the order of a minimal telescoper for $T$ is no more than~$\deg_y(A_\rho)+\dim_{\cC(x)}{\bW_K}+ 1$. The theorem follows by applying~$\deg_y(A_\rho) < \deg_y(B) = \sum_{j=1}^t{\mu_j m_j \deg_y(P_j)}$.
\end{proof}
In addition, we can further obtain a lower bound for the order of telescopers for~$T$.
\begin{theorem}\label{THM:lowerbound}
With the assumptions of Proposition~\ref{PRO:commonden}, further assume that $T$ is not summable w.r.t.\ $y$. Then the order of a telescoper for $T$ is at least
 \begin{equation*}\label{EQ:lowerbound}
   \max_{\begin{array}{c}
	  \scriptscriptstyle p \mid b, \text{ multi.} \alpha\\[-1ex]
	  \scriptscriptstyle \text{ irred.\ {\em \&} monic}\\[-1ex]
	  \scriptscriptstyle \deg_y(p) > 0
         \end{array}}\hspace{-5pt}
 \min \left\{\rho\in \bN\setminus\{0\}: \begin{array}{c}
                                            \sigma_y^\ell (p)^\alpha \mid \sigma_x^\rho(b) \\[1ex]
                                            \text{ for some } \ell \in \bZ
                                           \end{array}
      \right\}.
 \end{equation*}
\end{theorem}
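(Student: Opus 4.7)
The plan is to show that every minimum-order telescoper $L = \sum_{i=0}^{\rho} e_i S_x^i$ for $T$ must satisfy $\rho \geq \rho(p) := \min\{i \geq 1 : \sigma_y^\ell(p)^\alpha \mid \sigma_x^i(b) \text{ for some } \ell \in \bZ\}$ for every nontrivial monic irreducible factor $p$ of $b$ of multiplicity $\alpha$ with $\deg_y(p) > 0$. Since every telescoper has order at least the minimum, taking the maximum over such $p$ yields the stated bound.

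The first step is to reduce to the case $e_0 \neq 0$. If $e_0 = 0$, the identity $e_i S_x^i = S_x \cdot \sigma_x^{-1}(e_i) \cdot S_x^{i-1}$ lets me factor $L = S_x L'$ where $L'$ has order $\rho - 1$. Then $L(T) = \sigma_x(L'(T))$ being summable forces $L'(T)$ to be summable as well (apply $\sigma_x^{-1}$ and use $\sigma_x^{-1}\Delta_y = \Delta_y \sigma_x^{-1}$), so $L'$ is a lower-order telescoper, contradicting minimality. Together with the assumption that $T$ is not summable, this also gives $\rho \geq 1$.

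From Section~\ref{SEC:telescoping}, $L$ being a telescoper is equivalent to $A_\rho = 0$ and $Q_\rho = 0$ (independent conditions because $B_\rho$ is strongly prime with $K$, so $\gcd(B_\rho, v) = 1$). Rewriting $A_\rho = 0$ gives
\[
\sum_{i=0}^{\rho} e_i \, \frac{a_i}{b_i} \;=\; 0
\]
as rational functions in $\cC(x, y)$. Fix a monic irreducible factor $p$ of $b_0 = b$ of multiplicity $\alpha$ with $\deg_y(p) > 0$. Since $\gcd(a_0, b_0) = 1$ and $b_0$ is shift-free, a short partial-fraction calculation in $\cC(x)[y]$ at $p$ shows that the $1/p^\alpha$-coefficient of $e_0 \, a_0/b_0$ is nonzero: writing $b_0 = p^\alpha b_0'$ with $\gcd(p, b_0') = 1$ and $a_0 = u p^\alpha + v b_0'$ with $\deg_y v < \alpha \deg_y p$, we have $v \bmod p \equiv a_0 \, (b_0')^{-1} \bmod p \not\equiv 0$, and $e_0 \neq 0$ from step~1. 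For the total sum to vanish, this nonzero contribution must be cancelled, forcing some $i \in \{1, \ldots, \rho\}$ with $p^\alpha \mid b_i$.

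Finally, Proposition~\ref{PRO:brelationship} gives $b_i \approx_y \sigma_x^i(b)$, so the factor $p$ of $b_i$ (with multiplicity at least $\alpha$) corresponds to a unique shift-equivalent factor $\sigma_y^\ell(p)$ of $\sigma_x^i(b)$ of the same multiplicity. Hence $\sigma_y^\ell(p)^\alpha \mid \sigma_x^i(b)$ for some $\ell \in \bZ$, yielding $\rho \geq i \geq \rho(p)$. I expect the most delicate point to be the nonvanishing of the $1/p^\alpha$-coefficient in step~3, but it reduces to straightforward partial-fraction bookkeeping supported by $\gcd(a_0, b_0) = 1$ and the shift-freeness of $b_0$; the rest is an orchestration of already-established results (Propositions~\ref{PRO:brelationship} and \ref{PRO:sf}).
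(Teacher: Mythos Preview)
Your proof is correct and follows essentially the same route as the paper's: reduce to a minimal telescoper, use minimality (and non-summability) to get $e_0\neq 0$ and $\rho\geq 1$, read off $\sum_i e_i\,a_i/b_i=0$ from the system~\eqref{EQ:system}, then use partial fractions at $p^\alpha$ to force $p^\alpha\mid b_i$ for some $i\geq 1$, and finish with Proposition~\ref{PRO:brelationship}. The only differences are cosmetic: you spell out the $e_0\neq 0$ argument and the partial-fraction bookkeeping more explicitly than the paper does (note, however, that your local variables $u,v$ in the Bezout step clash with the paper's $u,v=\num(K),\den(K)$, and the trailing mention of Proposition~\ref{PRO:sf} is unnecessary since only Proposition~\ref{PRO:brelationship} is actually invoked).
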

\begin{proof}
Let $L=\sum_{i=0}^{\rho}{e_i S_x^i}$ with~$\rho\in \bN$, $e_0, \ldots ,e_\rho \in \cC(x)$, not all zero, be a minimal telescoper for $T$. Then $\rho \geq 1$ as~$T$ is not summable. By the modified Abramov-Petkov{\v s}ek reduction, we have \eqref{EQ:ithdecom} holds for~$0\leq i \leq \rho$, in which~$b_0=b$. Note that $L$ is a minimal telescoper, so~$e_0 \neq 0 $ and
\[e_0\frac{a_0}{b_0} \ +\ e_1\frac{a_1}{b_1}\ +\ \cdots\ +\ e_\rho \frac{a_\rho}{b_\rho}=0\]
by the system~\eqref{EQ:system}. By partial fraction decomposition, for any monic irreducible factor $p$ of $b_0$ with $\deg_y(p) > 0$ and multiplicity~$\alpha>0$, there exists an integer $i$ with $1\leq i \leq \rho$ so that $p^\alpha$ is also a factor of~$b_i$. According to Proposition~\ref{PRO:brelationship}, $b_i\approx_y \sigma_x^i(b_0)$. Thus there is a factor $p'$ of~$\sigma_x^{i}(b_0)$ with multiplicity at least~$\alpha$ s.t.\ $p' \sim_y p$.  Let $i_{p}$ be the minimal one with this property. Then the assertion follows by the fact that for each factor~$p$ of $b_0=b$, there exist no telescopers for~$T$ of order less than~$i_p$.
\end{proof}
With the upper and lower bounds, one may try to analyze the complexity of the algorithm {\sf ReductionCT} from \cite{CHKL2015} and further improve it by adding them into the procedures. Instead of doing so, in the rest of this paper, we are going to compare our bounds to the known ones in the literature.

\section{Comparison of bounds}\label{SEC:comparsion}
Upper and lower bounds for the order of telescopers for hypergeometric terms have been studied in \cite{MoZe2005} and \cite{AbLe2005}, respectively. In this section, we are going to review these known bounds and then compare them to our bounds.
\subsection{Apagodu-Zeilberger upper bound}\label{SUBSEC:AZbound}
Let $T$ be a {\em proper} hypergeometric term over $\cC(x,y)$, that is it can be written in the form
\begin{equation}\label{EQ:properht}
T = p w^x z^y\prod_{i= 1}^m\frac{(\alpha_i x + \alpha_i' y + \alpha_i''-1)!(\beta_i x -\beta_i' y + \beta_i''-1)!}{(\mu_i x + \mu_i' y + \mu_i''-1)! (\nu_i x-\nu_i' y+ \nu_i''-1)!},
\end{equation}
where $p\in \cC[x,y]$, $m\in \bN$ is fixed, $\alpha_i, \alpha_i', \beta_i, \beta_i', \mu_i, \mu_i', \nu_i, \nu_i'$ are non-negative integers and $w, z, \alpha_i'', \beta_i'', \mu_i'', \nu_i''\in \cC$ . Further assume that there exist no integers~$i,j$ with $1\leq i, j \leq m $ s.t.
\begin{align*}
\begin{array}{cccccc}
          & \{ \alpha_i = \mu_j & \& & \alpha_i' = \mu_j' & \& & \alpha_i''-\mu_j'' \in \bN \}\\[1ex]
 \text{or}& \{ \beta_i = \nu_j & \& & \beta_i' = \nu_j' & \& & \beta_i''-\nu_j'' \in \bN \}.
\end{array}
\end{align*}
We refer this as the \emph{generic} situtation.
Then Apagodu and Zeiberger \cite{MoZe2005} stated that the order of a minimal telescoper for $T$ is bounded by
\[B_{AZ} = \max\left\{\sum_{i=1}^m(\alpha_i'+\nu_i'), \sum_{i=1}^m(\beta'_i+\mu_i')\right\}.\]
We now show that $B_{AZ}$ given above is at least the order bound on minimal telescopers for $T$ obtained from Theorem~\ref{THM:upperbound}. Reordering the factorial terms in~\eqref{EQ:properht} if necessary, let $\cS$ be the maximal set of integers $i$ with~$1\leq i\leq m$ satisfying
\begin{align*}
\begin{array}{cccccc}
          & \{ \alpha_i = \mu_i & \& & \alpha_i' = \mu_i' & \& & \mu_i''-\alpha_i'' \in \bN \}\\[1ex]
 \text{or}& \{ \beta_i = \nu_i & \& & \beta_i' = \nu_i' & \& & \nu_i''-\beta_i'' \in \bN \}.
\end{array}
\end{align*}
Rewrite $T$ as
\[r w^x z^y
      \prod_{\begin{array}{c}
              \scriptscriptstyle i = 1\\[-1ex]
              \scriptscriptstyle i \notin \cS
             \end{array}
             }^m\frac{(\alpha_i x + \alpha_i' y + \alpha_i''-1)!(\beta_i x -\beta_i' y + \beta_i''-1)!}{(\mu_i x + \mu_i' y + \mu_i''-1)! (\nu_i x-\nu_i' y+ \nu_i''-1)!},\]
where $r \in \cC(x,y)$. For $q \in \cC[x,y]$, and $m \in \bN$, let
\[q^{\overline{m}} = q (q+1) (q+2)\cdots (q+m-1)\]
with the convention $q^{\overline{0}} = 1$. It can be calculated that the kernel and shell of the shift quotient $\sigma_y(T)/T$ are
\begin{align*}
 K = z \prod_{i
             } \frac{(\alpha_i x + \alpha_i' y + \alpha_i'')^{\overline{\alpha'_i}} (\nu_i x-\nu_i' y+ \nu_i''-\mu_i')^{\overline{\nu'_i}}}{(\mu_i x + \mu_i' y + \mu_i'')^{\overline{\mu'_i}}(\beta_i x -\beta_i' y + \beta_i''-\beta_i')^{\overline{\beta'_i}}},
\end{align*}
where the product runs over all $i$ from $1$ to $m$ s.t.\ $ i \notin \cS$, $\alpha_i',\beta_i'>0$ and $\mu_i', \nu_i' > 0$, and $S=r$, respectively. Let~$K = u/v$ with $\gcd(u,v)=1$. Note that $K$ is proper, a straightforward calculation implies
\[\deg_y(u) = \sum_{i=1,i\notin\cS}^m (\alpha_i'+\nu_i') \text{ and } \deg_y(v) = \sum_{i = 1, i\notin \cS}^m(\beta_i'+ \mu_i').\]
Applying the modified Abramov-Petkov{\v s}ek reduction to~$T$ w.r.t.\ $y$ yields~\eqref{EQ:initialred}, in which~$b$ is integer-linear. Since $b$ comes from the shift-free part of the denominator of~$r$, it factors into integer-linear polynomials of degree~$1$ which are shift-equivalent to either~$(\mu_i x+ \mu_i' y+\mu_i'')$ or $(\beta_i x -\beta_i' y+\beta_i'')$ w.r.t.\ $x,y$ for some~$i \in \cS$. Note that each~$i$ in~$\cS$ increases the multiplicities of the corresponding integer-linear factors in $b$ by at most 1.
Hence, the bound given in Theorem~\ref{THM:upperbound} is no more than
\begin{align*}
&\quad \max\{\deg_y(u), \deg_y(v)\} - \llbracket \deg_y(v-u) \leq \deg_y(u) -1\rrbracket\\
&+ \sum_{i=1,i\in\cS}^m (\beta_i'+\mu_i'),
\end{align*}
which is exactly equal to $B_{AZ} - \llbracket \deg_y(v-u) \leq \deg_y(u) -1\rrbracket$ since $\sum_{i=1,i\in\cS}^m (\alpha_i'+\nu_i')=\sum_{i=1,i\in\cS}^m (\beta_i'+\mu_i')$.

In general, i.e., when we have the generic situation, the order bound in Theorem \ref{THM:upperbound} is almost the same as~$B_{AZ}$. However, our bound may be much tighter in some special examples.
\begin{example}\label{EX:sharperub}
 Consider a rational function
 \[T=\frac{\alpha^2 y^2+\alpha^2 y-\alpha \beta y+2 \alpha x y+x^2}{(x+\alpha y+\alpha)(x+\alpha y) (x+\beta y)},\]
 where $\alpha, \beta$ are positive integers and $\alpha \neq \beta$. Rewriting $T$ into the proper form yields $B_{AZ} = \alpha + \beta$. On the other hand, the kernel of~$\sigma_y(T)/T$ is $1$ since $T$ is a rational function. By the modified Abramov-Petkov{\v s}ek reduction, $b=x+\beta y$ in~\eqref{EQ:initialred}. According to Theorem~\ref{THM:upperbound}, a minimal telescoper for~$T$ has order no more than $\beta$, which is indeed the real order of minimal telescopers for $T$.
\end{example}
\begin{remark}\label{REM:nonproper}
  Only with~\cite[Theorem 10]{Abra2003}, the upper order bound on minimal telescopers derived in~\cite{MoZe2005} can be also applied to non-proper hypergeometric terms. On the other hand, Theorem~\ref{THM:upperbound} can be applied to any hypergeometric term provided that its telescopers exist.
\end{remark}

\subsection{Abramov-Le lower bound}\label{SUBSEC:ALbound}
With Convention \ref{CON:convention}, assume that $T$ has the initial reduction~\eqref{EQ:initialred}, in which~$b$ is integer-linear. Define $H' {=} H/v$. A direct calculation leads to~$\sigma_y(H')/ H' = u/\sigma_y(v)$,
which is again shift-reduced w.r.t.\ $y$. Let $d'\in \cC(x)[y]$ be the denominator of $\sigma_x(H')/H'$.
Then the algorithm {\sf LowerBound} in \cite{AbLe2005} asserts that the order of telescopers for $T$ is at least
\begin{align*}
B_{AL} =  \hspace{-8pt}\max_{{\begin{array}{c}
		    \scriptscriptstyle p \mid b\\[-1ex]
		    \scriptscriptstyle \text{ irred.\ \& monic}\\[-1ex]
		    \scriptscriptstyle \deg_y(p) > 0
		  \end{array}
		 }}\hspace{-8pt}
	    \min \left\{\rho\in \bN\setminus\{0\}:
	    \begin{array}{c}
	      \sigma_y^\ell(p) \mid \sigma_x^{\rho} (b) \\
	      \text{ or }\\
	      \sigma_y^\ell(p) \mid \sigma_x^{\rho-1}(d')\\[1ex]
	      \text{ for some } \ell \in \bZ
	    \end{array}\right\}
\end{align*}
Comparing to $B_{AL}$, it is obvious that the lower bound given by Theorem~\ref{THM:lowerbound} can be better but not worse than~$B_{AL}$.
\begin{example}\label{EX:sharperlb}
 Consider a hypergeometric term
 \[T = \frac{1}{(x-\alpha y-\alpha)(x-\alpha y-2)!},\]
 where $\alpha\in \bZ$ and $\alpha \geq 2$. By the algorithm {\sf LowerBound}, a telescoper for~$T$ has order at least $2$. On the other hand, a telescoper for $T$ has order at least~$\alpha$ by Theorem~\ref{THM:lowerbound}. In fact,~$\alpha$ is exactly the order of minimal telescopers for $T$.
\end{example}
\section{Acknowledgement}
I am very grateful to my advisors Manuel Kauers and Ziming Li for their helpful discussions and valuable comments. I also would like to thank Shaoshi Chen for his suggestions and support. Besides, I wish to express my gratitude to anonymous referees for many useful suggestions.

%
\end{document}